\begin{document}

% Page heads

% Title portion

\title{Generalized Second Price Auction with Probabilistic Broad Match}
\author{
Wei Chen
\affil{Microsoft Research Asia, Beijing, China}
Di He
\affil{Peking University, Beijing, China}
Tie-Yan Liu
\affil{Microsoft Research Asia, Beijing, China}
Tao Qin
\affil{Microsoft Research Asia, Beijing, China}
Yixin Tao
\affil{Shanghai Jiao Tong University, Shanghai, China}
Liwei Wang
\affil{Peking University, Beijing, China}
}

% NOTE! Affiliations placed here should be for the institution where the
%       BULK of the research was done. If the author has gone to a new
%       institution, before publication, the (above) affiliation should NOT be changed.
%       The authors 'current' address may be given in the "Author's addresses:" block (below).
%       So for example, Mr. Abdelzaher, the bulk of the research was done at UIUC, and he is
%       currently affiliated with NASA.

\begin{abstract}
Generalized Second Price (GSP) auctions are widely used by search engines today to sell their ad slots. Most search engines have supported broad match between queries and bid keywords when executing GSP auctions, however, it has been revealed that GSP auction with standard broad-match mechanism they are currently using (denoted as \textsf{SBM-GSP}) has several theoretical drawbacks (e.g., its theoretical properties are known only for the single-slot case and full-information setting, and even in this simple setting, the corresponding worst-case social welfare can be rather bad). To address this issue, we propose a novel broad-match mechanism, which we call the \emph{Probabilistic Broad-Match} (\textsf{PBM}) mechanism. Different from \textsf{SBM} that puts together the ads bidding on all the keywords matched to a given query for the GSP auction, the GSP with \textsf{PBM} (denoted as \textsf{PBM-GSP}) randomly samples a keyword according to a predefined probability distribution and only runs the GSP auction for the ads bidding on this sampled keyword. We perform a comprehensive study on the theoretical properties of the \textsf{PBM-GSP}. Specifically, we study its social welfare in the worst equilibrium, in both full-information and Bayesian settings. The results show that \textsf{PBM-GSP} can generate larger welfare than \textsf{SBM-GSP} under mild conditions. Furthermore, we also study the revenue guarantee for \textsf{PBM-GSP} in Bayesian setting. To the best of our knowledge, this is the first work on broad-match mechanisms for GSP that goes beyond the single-slot case and the full-information setting. \end{abstract}

\category{J.4}{Computer Applications}{Social and Behavioral Sciences|Economics}

\terms{Economics, Theory}

\keywords{Auction Theory, Mechanism Design, Price of Anarchy, Sponsored Search}

% At a minimum you need to supply the author names, year and a title.
% IMPORTANT:
% Full first names whenever they are known, surname last, followed by a period.
% In the case of two authors, 'and' is placed between them.
% In the case of three or more authors, the serial comma is used, that is, all author names
% except the last one but including the penultimate author's name are followed by a comma,
% and then 'and' is placed before the final author's name.
% If only first and middle initials are known, then each initial
% is followed by a period and they are separated by a space.
% The remaining information (journal title, volume, article number, date, etc.) is 'auto-generated'.

\maketitle

\section{Introduction}

Online advertising has become a key revenue source for many businesses on the Internet. Sponsored search is a major type of online advertising, which displays paid advertisements (ads) along with organic search results. Generalized Second Price (GSP) auction is one of the most commonly used auction mechanisms in sponsored search, which works as follows. When a query is issued by a web user, the search engine ranks all the ads bidding on this query (or keywords related to the query) according to their bid prices, and charges the owner of a clicked ad by the minimum bid price for him/her to maintain the current rank position.\footnote{In practice, the predicted click-through rate is also used in the ranking and pricing rules. However, it can be safely absorbed into the \emph{weighted} bid prices without influencing the theoretical analysis on the GSP auctions.}

If only the ads that exactly bid on the query are included in the auction, we call the corresponding mechanism an \emph{exact-match} mechanism. The GSP auction in this specific setting has been well studied in the literature \cite{babaioff2010equilibrium,lahaie2007revenue,goel2009hybrid,varian2007position,aggarwal2006truthful,caragiannis2011efficiency,gsp}, and has been shown to have a number of nice theoretical properties: (1) It possesses an efficient (welfare-maximizing) Nash equilibrium; (2) Its social welfare in equilibrium is fairly good even in the worst case : the pure price of anarchy (PoA) is bounded by 1.282 and the Bayes-Nash PoA is bounded by 2.927; (3) In the Bayesian setting, the GSP auction paired with the Myerson reserve price generates at least a constant fraction (i.e., $1/6$) of the optimal revenue in its Bayes-Nash equilibria for MHR distribution.

Despite the fruitful and positive results, the exact-match mechanism is not sufficient when we are faced with practical requirements in commercial search engines. First, the query space is extremely large (billions of queries are issued by web users every day), so it is practically impossible for advertisers to bid on every query related to their ads. Second, even if advertisers are capable enough to bid on the huge number of related queries, the search engine might not be able to afford it due to the scalability and latency constraints. Due to these reasons, commercial search engines usually use a \emph{broad-match} mechanism to enhance the GSP auction. A broad-match mechanism requires advertisers to bid on at most $\kappa$ keywords instead of an arbitrary number of queries, and matches the keywords to queries using a query-keyword bipartite graph (in which the number of keywords is significantly smaller than the number of queries). The broad-match mechanism is friendly to advertisers since they only need to consider a relatively small number of keywords in order to reach a large number of related queries. The mechanism is also friendly to the search engine since it restricts the complexity of the bidding language and therefore that of the auction system.

Today, most search engines implement the broad-match mechanism in a straightforward manner. That is, when a query is issued, all the ads bidding on the keywords that can be matched to the query on the query-keyword bipartite graph will be put together for the GSP auction. And for every advertiser, the bids on the matched keywords will be transformed to the bid on the query using some pre-defined heuristics (e.g., the maximum bid on the matched keywords). For ease of reference, we call the broad-match mechanism described above as the \emph{Standard Broad-Match} GSP mechanism, or \textsf{SBM-GSP} for short.

Although this mechanism effectively addresses the problems with the exact-match mechanism, as far as we know, it has several theoretical drawbacks.
\begin{itemize}
\item The social welfare of the \textsf{SBM-GSP} mechanism was studied in \cite{dhangwatnotai2011multi}, for the single-slot case and full-information setting only. By using the notion of homogeneity (denoted as $c$) to measure the diversity of an advertiser's valuations over different queries that can be matched to a keyword, an almost-tight pure PoA bound was derived, whose order is $\Theta(c^2)$. Considering that $c$ is usually large in practice, it can be concluded that the social welfare of the \textsf{SBM-GSP} mechanism can be rather bad in its worst equilibrium.
\item One has not obtained a complete picture about the theoretical properties of the \textsf{SBM-GSP} mechanism: no results are available regarding the multi-slot case (which is, however, more practically important since most search engines sell multiple ad slots per query), and even for the single-slot case, the social welfare and revenue in the Bayesian setting are not clear.
\end{itemize}

Given the aforementioned limitations of the \textsf{SBM-GSP} mechanism, a natural question to ask is whether we can design a broad-match mechanism with better guarantees on its performance, in terms of both social welfare and revenue, for both single-slot and multi-slot cases, and in both full-information and Bayesian settings. This is exactly the focus of our work.

In this paper, we propose a new broad-match mechanism, which we call \emph{Probabilistic Broad-Match} mechanism. Its basic idea is as follows. For each query, our mechanism assigns a matching probability to every keyword that can be matched to this query on the query-keyword bipartite graph. When the query is issued by a user, the mechanism randomly chooses a keyword according to the matching probability distribution and runs the GSP auction only upon those ads that bid on the chosen keyword. For simplicity, we also use \textsf{PBM-GSP} to refer to the above mechanism.

We perform a comprehensive study on the social welfare in equilibrium of the \textsf{PBM-GSP} mechanism, for both single-slot and multi-slot cases, and in both full-information and Bayesian settings. We also derive a revenue bound for the \textsf{PBM-GSP} mechanism for both single-slot and multi-slot cases in the Bayesian setting. To the best of our knowledge, this is the first work on broad-match mechanisms that goes far beyond the single-slot case and the full-information setting.

\paragraph{Our Results} The contributions of our work can be summarized as follows.
\begin{itemize}
\item (Section 3) We propose a novel broad-match mechanism (i.e., the \textsf{PBM} mechanism) for multi-slot sponsored search auctions.
\item (Section 4) We analyze the social welfare in equilibrium of the \textsf{PBM-GSP} mechanism in both full-information and Bayesian settings. We define a new concept, called keyword-level expressiveness (denoted as $\beta$), which can better characterize the expressiveness of the bidding language in the \textsf{PBM-GSP} mechanism than the concept of expressiveness proposed in previous work \cite{dhangwatnotai2011multi}.
\begin{itemize}
\item (Section 4.1) We extend the concept of homogeneity $c$ defined in \cite{dhangwatnotai2011multi} to the Bayesian setting, and prove that the Bayes-Nash PoA of \textsf{PBM-GSP} is at most $\frac{ec(1+\beta)}{(e-1)\beta}$ in the multi-slot case. The bound can be further optimized to $\frac{c(1+\beta)}{\beta}$ in the single-slot case.
\item (Section 4.2) We prove that in the full-information setting, the pure PoA of \textsf{PBM-GSP} is at most $\frac{c(1+\beta)}{\beta}$ when there are multiple slots to display ads. And the bound can be improved to $\frac{c}{\beta}$ in the single-slot case (which is tight with respect to each factor). Furthermore, we show that the pure PoA bound of \textsf{PBM-GSP} is better than that of \textsf{SBM-GSP} in the same setting under mild conditions.
\end{itemize}
\item (Section 5) We analyze the revenue bound of \textsf{PBM-GSP} in the Bayesian setting. We prove that by using the Myerson reserve price to each keyword, \textsf{PBM-GSP} can achieve a revenue at least $\frac{\beta}{1 + \beta}\frac{1}{2\eta(ce)^2}$ of the optimal social welfare with MHR distribution, where $\eta$ is the maximum derivative of the virtual value function.
\end{itemize}

\section{Preliminaries}
In this section, we introduce the basics about broad-match auctions, and some preliminary concepts that will be used in our theoretical analysis.

\subsection{Broad-Match Auctions}

According to \cite{feldman2007budget,broder2009online,even2009bid,dhangwatnotai2011multi}, a broad-match mechanism can be defined on a query-keyword bipartite graph. Denote $\mathcal{Q}$ as the query space, and denote $P$ as a probability distribution over $\mathcal{Q}$, which indicates the probability that query $q$ is issued by users. Denote $\mathcal{S}$ as the keyword space. In practice, the size of $\mathcal{Q}$ is much larger than the size of $\mathcal{S}$. Denote $G =(\mathcal{Q}, \mathcal{S}, \mathcal{E})$ as a (undirected) bipartite graph between queries and keywords, in which an edge $(q,s)\in\mathcal{E}$ if and only if query $q$ can be matched to keyword $s$ (or equivalently, $s$ can be matched to $q$).  Denote $N_{G}(v)$ as the neighborhood of vertex $v\in \mathcal{Q}\cup\mathcal{S}$, i.e., for any query $q$, $N_{G}(q)=\{s: (q,s)\in\mathcal{E}\}$ represents the set of keywords that can be matched to the query, and for any keyword $s$, $N_{G}(s)=\{q: (q,s)\in\mathcal{E}\}$ represents the set of queries that can be matched to the keyword. Without loss of generality, we assume $N_{G}(s)\neq \emptyset$, for all $ s$ and $N_{G}(q)\neq \emptyset$, for all $ q$.

Assume there are $n$ advertisers and $n$ slots. Denote  $w_k$ as the click probability associated with the $k$-th ad slot\footnote{In real world, the slot number is usually bounded by a constant $K$. In this case, we can define $w_k=0,k>K$ without loss of any generality.}, which satisfies $w_{i}\geq w_j$ i.f.f $i<j$. We assume advertiser $i$ has a private valuation $v_i^q$ for query $q\in\mathcal{Q}$ if his/her ad is clicked by the users, denote $\mathbf{v}=(v_1,v_2,...,v_n)$ as the valuation profile of advertisers in which $v_i \in R_{+}^{|\mathcal{Q}|}$ is the vector indicating the $i$-th advertiser's valuation for all the queries, and $\mathbf{v}_{-i}$ as the valuations of the other advertisers. We assume for any query $q$, there is at least one advertiser that positively valuates it. Define $Q_i=\{q\in\mathcal{Q}:v_i^q>0\}$ as the query set that advertiser $i$ has positive values on. For ease of reference, in the rest of the paper, we will call the queries (keywords) that an advertiser positively valuates \emph{positive queries (keywords)}.

Denote $\mathbf{b}=(b_1,b_2,...,b_n)$ as the advertisers' bid profile, where $b_i \in R_{+}^{|\mathcal{S}|}$ is a vector indicating the $i$-th advertiser's bid prices on all the keywords in $\mathcal{S}$, and denote $\mathbf{b}_{-i}$ as the bids of advertisers excluding $i$. According to the industry practice, we assume that each advertiser can only bid on up to $\kappa$ keywords. As a result, for each $b_i$, there are at most $\kappa$ positive values. Denote $b_i^s$ as the bid price of advertiser $i$ on keyword $s$ and $\mathbf{b}^s$ as all the advertisers' bids on keyword $s$.

Based on the notations above, \textsf{SBM-GSP} can be described as follows. When a query $q$ is issued, the \textsf{SBM-GSP} mechanism first finds all the keywords that can be matched to the query. Second, it includes all the ads that bid on these keywords into the auction and uses the following formula to transform the bid prices on keywords of advertiser $i$ to his/her bid price on the query: $b_i^q=\max_{s\in N_{G}(q)} b_i^s$. In the end, the GSP auction is run upon the ads with their query-level bids, i.e., all the ads are ranked by their bids, and the payment of a clicked ad equals the bid of the ad ranked right below it.

\subsection{Solution Concepts}

In this paper, we consider rational behaviors under various assumptions on the information availablity to the advertisers. In general, the advertisers are engaged as players in a game defined by the auction mechanism (in the remaining of the paper, we use ``advertiser'' and ``player'' interchangeably). Every advertiser aims at selecting a bidding strategy that maximizes his/her utility. According to the availability of the information, we can categorize the settings into the Bayesian setting (partial information setting) and the full-information setting respectively.

In the Bayesian setting, we assume that the valuation (type) profile $\mathbf{v}$ is drawn from a publicly known distribution $\mathbf{F}$. A strategy for player $i$ is a (possibly randomized) mapping $b_i : R^{|\mathcal{Q}|}_+ \longrightarrow R^{|\mathcal{S}|}_+$, mapping his/her type $v_i$ to a bid vector $b_i(v_i)$. We use $\mathbf{b}(\mathbf{v}) = (b_1(v_1),b_2(v_2),...,b_n(v_n))$ to denote the corresponding bid profile when $\mathbf{b}(\cdot)$ is applied to $\mathbf{v}$. Denote $u_i(\mathbf{b})$ as the utility function of advertiser $i$. We say a strategy $\mathbf{b}(\cdot)$ is a Bayes-Nash equilibrium for distribution $\mathbf{F}$, if for all $i$, all $v_i$, and all alternative strategies $b'_i(\cdot)$, {\small\begin{eqnarray*}
\mathbb{E}_{\mathbf{v}_{-i},\mathbf{b}}[u_i(b_i(v_i),\mathbf{b}_{-i}(\mathbf{v}_{-i}))|v_i] \geq \mathbb{E}_{\mathbf{v}_{-i},\mathbf{b}}[u_i(b'_i(v_i),\mathbf{b}_{-i}(\mathbf{v}_{-i}))|v_i].
\end{eqnarray*}}
In other words, in a Bayes-Nash equilibrium, each player maximizes his/her expected utility using strategy $b_i(\cdot)$, assuming that the others bid according to strategies $\mathbf{b}_{-i}(\cdot)$.

In the full-information setting, the valuation profile $\mathbf{v}$ is known and fixed. In this setting, a pure strategy of any advertiser corresponds to a bid vector $b_i$. we say that a bid profile $\mathbf{b}$ is a (pure) Nash equilibrium if there is no deviation from which the players can be better off, i.e., for all advertiser $i$, for all $b'_i$,
{\small\begin{eqnarray*}
u_i(b_i,\mathbf{b}_{-i}) \geq u_i(b'_i,\mathbf{b}_{-i}).
\end{eqnarray*}}

\section{Probabilistic Broad-Match Mechanism}

As discussed in the introduction, the \textsf{SBM-GSP} mechanism has several drawbacks from a theoretical perspective. In this paper, we develop a new broad-match mechanism with better theoretical guarantee, which we call \textit{Probabilistic Broad-Match} (\textsf{PBM-GSP}) mechanism. The detail of the \textsf{PBM-GSP} mechanism is described in Algorithm \ref{alg:one}, and can be explained as below.
\begin{algorithm}[]
\SetAlgoNoLine
\KwIn{Advertiser's bid profile $\mathbf{b}$, matching probability $\pi_q(s)$ for any query $q$, keyword $s$.}
\KwOut{The ads to show for each query and the prices to charge from advertisers.}
\For{each query $q$ submitted to search engine}{
      Sample keyword $s$ according to distribution $\pi_{q}(s)$\;
      Set $b_i^q$ to be $b_i^{s}$\;
      Run the GSP auction on $b_1^q,\cdots,b_n^q$\;
      }
\caption{Probabilistic Broad-Match Mechanism with GSP Auction (\textsf{PBM-GSP})}
\label{alg:one}
\end{algorithm}

Given the query-keyword bipartite graph $G$, for each query $q\in\mathcal{Q}$, we impose a matching probability distribution $\pi_q(s)$ whose support is $N_G(q)$, i.e., $\pi_q(s)>0$ if and only if $s\in N_G(q)$, and $\sum_{s \in N_{G}(q)} \pi_q(s)=1$. With this matching probability distribution, for any issued query $q$, the mechanism randomly samples a keyword $s\in N_G(q)$, and selects the ads bidding on the keyword $s$ into the auction. For each selected ad, the bid price on keyword $s$ will be directly used as the bid price on query $q$ during this round of auction, \footnote{One may have noticed that due to the probabilistic sampling, an advertiser can only get access to a fraction of the whole query volume if he remains bidding on the same set of keywords as he/she does with \textsf{SBM-GSP}. Therefore, some advertisers may have to bid on more keywords so as to maintain the same visibility of their ads to the users. Fortunately, since the number of keywords is always significantly smaller than the number of queries, the situation will not be as serious as in exact-match mechanism.}  i.e., $b_i^q=b_i^{s}$, where $s\sim \pi_q$, and then a GSP auction is run to determine the ad allocations and prices.

For ease of description, we define $\sigma_{s,\mathbf{b}}(k)$ as the advertiser who is ranked at position $k$ and $\sigma_{s,\mathbf{b}}^{-1}(i)$ as the ranking position of advertiser $i$, for any keyword $s$ and bid profile $\mathbf{b}$. For sake of rigorousness, we define $\sigma_{s,\mathbf{b}}(k)=\infty$ if there are fewer than $k$ positive bids on keyword $s$, and define $b_{\infty}^s = v_{\infty}^q = 0$, for any query $q$ and keyword $s$. We also define $\sigma_{s,\mathbf{b}}^{-1}(i) = \infty$ if advertiser $i$ does not bid on keyword $s$, and define $w_{\infty} = 0$. Define $p_{ s,\mathbf{b}}(i)$ as the price charged to player $i$ when keyword $s$ is sampled and a user clicks on the ads, i.e., for \textsf{PBM-GSP}, if advertiser $j$ is ranked right below advertiser $i$, then $p_{ s,\mathbf{b}}(i)=b_j^s$. With the aforementioned notations, the expected utility of advertiser $i$ can be defined as
{\small\begin{eqnarray*}
u_i(\mathbf{b}) = \int_{q\in\mathcal{Q}} \sum_{s \in N_G(q)} \pi_q(s) w_{\sigma_{s,\mathbf{b}}^{-1}(i)} (v_i^q - p_{ s,\mathbf{b}}(i)) dP \end{eqnarray*}}

As a common way to rule out unnatural equilibria \cite{2012POA_GSP,2012Rev_GSP,caragiannis2011efficiency,dhangwatnotai2011multi}, we only consider conservative bidders in the theoretical analysis. It is easy to show that for any advertiser $i$ on any keyword $s$, a bidding price $ b_i^s > v_i^s $ is always weakly dominated by the bid $b_i^s = v_i^s$ (see Lemma \ref{convervativebidderlemma}), in which $v_i^s$ is the expected value of keyword $s$ for advertiser $i$ and defined as $v_i^s \triangleq\mathbb{E}[v_i^q|s]=\frac{\int_{q\in N_{G}(s)} \pi_q(s)v_i^qdP}{\int_{q\in N_{G}(s)} \pi_q(s)dP}$.
\begin{lemma}(Conservative bidder)\label{convervativebidderlemma}
For any advertiser $i$, a bid price $b^s_i> v_i^s $ for keyword $s$ is always weakly dominated by $b_i^s = v_i^s$, where $v_i^s \triangleq\mathbb{E}[v_i^q|s]=\frac{\int_{q\in N_{G}(s)} \pi_q(s)v_i^qdP}{\int_{q\in N_{G}(s)} \pi_q(s)dP}$.
\end{lemma}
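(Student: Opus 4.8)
The plan is to exploit the fact that the utility $u_i(\mathbf{b})$ decomposes additively across keywords, and that advertiser $i$'s bid on keyword $s$ affects only the auction that is run when $s$ is the sampled keyword. First I would interchange the order of integration and summation in the definition of $u_i$, rewriting
\[
u_i(\mathbf{b}) = \sum_{s \in \mathcal{S}} \int_{q \in N_{G}(s)} \pi_q(s)\, w_{\sigma_{s,\mathbf{b}}^{-1}(i)} \bigl(v_i^q - p_{s,\mathbf{b}}(i)\bigr)\, dP .
\]
For a fixed keyword $s$, both the rank $\sigma_{s,\mathbf{b}}^{-1}(i)$ and the price $p_{s,\mathbf{b}}(i)$ depend only on the bid profile $\mathbf{b}^s$ on keyword $s$ and are constant in $q$. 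Pulling them outside the integral and writing $m_s \triangleq \int_{q \in N_{G}(s)} \pi_q(s)\, dP$, the contribution of $s$ becomes $w_{\sigma_{s,\mathbf{b}}^{-1}(i)}\, m_s \,(v_i^s - p_{s,\mathbf{b}}(i))$, precisely because $v_i^s$ is defined as the conditional expectation $\int_{q\in N_G(s)} \pi_q(s) v_i^q\,dP / m_s$. In other words, each keyword contributes exactly the utility of a single-item GSP auction on $s$ in which $i$ has value $v_i^s$, scaled by the positive constant $m_s$.

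This reduction makes the claim equivalent to the classical fact that, in a single GSP auction, bidding above one's value is weakly dominated by bidding one's value. Since changing $b_i^s$ leaves every term with $s' \neq s$ untouched, I would fix $\mathbf{b}_{-i}$ together with $i$'s bids on all keywords other than $s$, and compare the contribution of keyword $s$ under $b_i^s = v_i^s$ against that under an arbitrary $b_i^s > v_i^s$. Raising the bid can only (weakly) improve $i$'s rank on $s$. If the rank is unchanged, the advertiser ranked directly below $i$ is unchanged, so the price $p_{s,\mathbf{b}}(i)$ and hence the whole contribution are identical.

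If instead the higher bid strictly improves $i$'s rank, then the advertiser now ranked directly below $i$ must have a bid lying in $[v_i^s, b_i^s]$: they were above $i$ when $i$ bid $v_i^s$ yet below $i$ when $i$ bid $b_i^s$. Hence the price paid is at least $v_i^s$, and the contribution $w_{\sigma_{s,\mathbf{b}}^{-1}(i)} m_s (v_i^s - p_{s,\mathbf{b}}(i))$ is nonpositive. Bidding $v_i^s$, on the other hand, places directly below $i$ an advertiser whose bid is at most $v_i^s$, so there the contribution is nonnegative. In every case the contribution under $v_i^s$ is at least that under $b_i^s$, and summing over $s$ gives $u_i(v_i^s, \mathbf{b}_{-i}^{\,s}, \dots) \geq u_i(b_i^s, \mathbf{b}_{-i}^{\,s}, \dots)$ for all opponent profiles, which is weak domination.

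The remaining work is purely bookkeeping: verifying $m_s > 0$ (immediate from $N_{G}(s)\neq\emptyset$ and $\pi_q(s)>0$ on $N_{G}(s)$), handling ties in the ranking, and checking the degenerate cases covered by the conventions $b_{\infty}^s = 0$ and $w_{\infty}=0$ (for instance when bidding $v_i^s$ secures no slot, so the contribution is $0$ while the over-bid contribution is nonpositive). I expect the only mildly delicate part to be phrasing the rank-comparison argument so that these boundary cases are subsumed cleanly; the reduction in the first paragraph is the conceptual core and is essentially immediate once $v_i^s$ is recognized as the effective per-keyword value.
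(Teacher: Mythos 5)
Your proposal is correct and follows essentially the same route as the paper's own proof: decompose the utility keyword-by-keyword, observe that the definition of $v_i^s$ turns each keyword's contribution into a standard GSP utility with value $v_i^s$ (scaled by $m_s$), and then argue by cases that an over-bid either leaves the position and price unchanged or improves the position at a price of at least $v_i^s$, making the contribution nonpositive. The only difference is cosmetic: you make explicit that the truthful bid yields a nonnegative contribution (which the paper leaves implicit) and you spell out the boundary conventions, but the decomposition and the rank-comparison argument are exactly the paper's.
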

\begin{proof}
Note that with the \textsf{PBM-GSP} mechanism, advertisers will not compete across keywords. For advertiser $i$, denote $u^s_i(\mathbf{b}) = \int_{q\in N_{G}(s)}\pi_q(s)w_{\sigma_{s,\mathbf{b}}^{-1}(i)}(v^q_i- p_{ s,\mathbf{b}}(i))dP$ as his/her utility obtained from keyword $s$.  It is easy to see that $u_i(\mathbf{b}) = \sum_{s\in\mathcal{S}}u^s_i(\mathbf{b})$. For any bidding profile $\mathbf{b}_{-i}$, if advertiser $i$ bids a value larger than $v_i^s$ on keyword $s$ and get the same position $k$ as bidding $v_i^s$, changing his/her bid to $v_i^s$ will not hurt his/her total utility. If he/she bids a larger value and obtains a better position $k'$, he/she will suffer a payment larger than $v^s_i$ when his/her ad is clicked, and therefore his/her expected utility on keyword $s$ must be less than
$\int_{q\in N_{G}(s)}\pi_q(s)w_{k'}[v^q_i-v^s_i]dP= w_{k'}[\int_{q\in N_{G}(s)}\pi_q(s) v^q_idP-v^s_i\int_{q\in N_{G}(s)}\pi_q(s)dP]=0$, and the theorem follows.
\end{proof}
In PBM mechanism, bids for different keywords will not be mixed up in the same auction, it is easier for advertisers to evaluate their payoffs on each keyword. As a result, they could develop more accurate bidding strategies to reflect their valuations on each keyword. For example, it can be easily shown that in single-slot setting, the dominant strategy for an advertiser is to truthfully report the expected valuation on the keyword that he/she chooses to bid.
\begin{corollary}
When there is only one slot to display Ads, for any advertiser $i$, the weakly dominant strategy for keyword $s$ is $b_i^s = v_i^s$.
\end{corollary}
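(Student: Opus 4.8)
The plan is to reduce the claim to the classical dominant-strategy argument for the single-item Vickrey auction, applied at the level of a single keyword. The crucial structural fact, already exploited in the proof of Lemma~\ref{convervativebidderlemma}, is that under \textsf{PBM-GSP} advertisers never compete across keywords, so the utility decomposes additively as $u_i(\mathbf{b})=\sum_{s\in\mathcal{S}}u_i^s(\mathbf{b})$, and each summand $u_i^s$ depends on $i$'s bids only through $b_i^s$. Hence it suffices to show that, fixing the competitors' bids $\mathbf{b}_{-i}$ arbitrarily, no deviation of $b_i^s$ away from $v_i^s$ can strictly increase $u_i^s$; this is precisely what ``weakly dominant strategy for keyword $s$'' means.

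First I would simplify $u_i^s$ in the single-slot regime. Since $w_1>0$ and $w_k=0$ for $k\ge 2$, advertiser $i$ earns nonzero utility on keyword $s$ only when ranked first. Writing $b^\ast=\max_{j\neq i}b_j^s$ for the highest competing bid on $s$, the winner's charged price is exactly $b^\ast$, which is independent of $i$'s own bid. Using the defining identity $v_i^s\int_{q\in N_G(s)}\pi_q(s)dP=\int_{q\in N_G(s)}\pi_q(s)v_i^q dP$, the per-query valuations collapse, and one obtains that, when $i$ wins, $u_i^s=w_1\big(\int_{q\in N_G(s)}\pi_q(s)dP\big)\,(v_i^s-b^\ast)$, while $u_i^s=0$ otherwise. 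Thus keyword $s$ behaves precisely like a single-item second-price auction in which $i$'s effective valuation is $v_i^s$.

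Then the standard Vickrey comparison finishes the argument. By Lemma~\ref{convervativebidderlemma} the bid $b_i^s=v_i^s$ already weakly dominates any overbid $b_i^s>v_i^s$, so I only need to rule out underbidding. Comparing $b_i^s=v_i^s$ against an arbitrary $b_i^s<v_i^s$ in the three cases $v_i^s>b^\ast$, $v_i^s<b^\ast$, and $v_i^s=b^\ast$: when $v_i^s>b^\ast$, bidding $v_i^s$ wins and yields the strictly positive quantity $w_1\big(\int_{q\in N_G(s)}\pi_q(s)dP\big)(v_i^s-b^\ast)$, whereas underbidding either wins with the identical payoff or loses and yields $0$; when $v_i^s\le b^\ast$, bidding $v_i^s$ yields at most $0$ and any underbid also yields $0$. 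In every case bidding $v_i^s$ is at least as good.

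The argument carries no real obstacle beyond this bookkeeping; the only point requiring care is the tie at $v_i^s=b^\ast$, but there the winner's payoff $w_1\big(\int_{q\in N_G(s)}\pi_q(s)dP\big)(v_i^s-b^\ast)$ equals exactly $0$, so however ties are resolved the payoff from bidding $v_i^s$ coincides with the payoff from losing, leaving the weak-dominance conclusion intact. Combining the two directions shows that $b_i^s=v_i^s$ weakly dominates every alternative bid on keyword $s$, which is the assertion of the corollary.
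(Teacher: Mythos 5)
Your proof is correct and follows exactly the route the paper intends: the paper states this corollary without proof as an immediate consequence of Lemma~\ref{convervativebidderlemma}, relying on the same cross-keyword decomposition $u_i(\mathbf{b})=\sum_s u_i^s(\mathbf{b})$ and the observation (made explicit later, in the proof of Theorem~\ref{welfare1}) that each keyword's auction is a GSP with effective value $v_i^s\int_{q\in N_G(s)}\pi_q(s)dP$, which in the single-slot case is a scaled Vickrey auction. Your filling-in of the underbidding case analysis and the tie-handling is the standard second-price argument and is sound.
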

In the next sections, we show this probabilistic matching can eventually improve the performances of the auction system.

\section{Social Welfare Analysis}
In this section, we present our theoretical results on the social welfare (efficiency) of the proposed \textsf{PBM-GSP} mechanism. Specifically, we study the ratio between the optimal social welfare and the worst-case welfare in equilibrium, which is also known as the Price of Anarchy (PoA) \cite{koutsoupias1999worst,giotis2008equilibria,christodoulou2008bayesian,bhawalkar2011welfare}:
\begin{itemize}
    \item \textsf{Bayes-Nash PoA} : In the Bayesian setting, we assume every advertiser $i$ privately knows his/her own valuation vector $v_i$ for the queries, and only knows a prior distribution of other advertisers' valuation vectors. Assume the valuation profile $\mathbf{v} $ is drawn from a public distribution $\mathbf{F}$ and the Bayes-Nash PoA is defined as
        {\small\begin{eqnarray*}
        \textrm{Bayes-Nash }PoA=\max_{\mathbf{F}, \mathbf{b}(\cdot)\textrm{: a Bayes-Nash equilibrium}}\frac{\mathbb{E}_{\mathbf{v}} [SW(\mathcal{OPT}(\mathbf{v}))]}{\mathbb{E}_{\mathbf{v},\mathbf{b}(\mathbf{v})}[SW(\mathbf{b}(\mathbf{v}))]},
        \end{eqnarray*}}
where $SW(\mathcal{OPT}(\mathbf{v}))$ refers to the social welfare of the optimal allocation that allocates slot $k$ of any query $q$ to the player with the $k$-th largest value, i.e.,
        {\small\begin{eqnarray}
SW(\mathcal{OPT}(\mathbf{v}))= \int_{q\in \mathcal{Q}}\sum_{k=1}^n w_k v_{[k]}^q\ dP,
        \end{eqnarray}}
where $v_{[k]}^q$ is the $k$-th largest value among the valuations of query $q$. Similarly, $SW(\mathbf{b})$ refers to the social welfare of the \textsf{PBM-GSP} mechanism with bidding profile $\mathbf{b}$, i.e.,
        {\small\begin{eqnarray}
        SW(\mathbf{b})= \int_{q\in \mathcal{Q}} \sum_{s \in N_G(q)} \pi_q(s)\sum_{k=1}^n w_k v^q_{\sigma_{s,\mathbf{b}}(k)} dP.
        \end{eqnarray}}
  \item \textsf{Pure PoA} : In the full-information setting, the valuation of each advertiser on each query is fixed and the pure PoA can be mathematically defined as follows:
        {\small\begin{eqnarray*}
        \textrm{pure }PoA=\max_{\mathbf{v}, \mathbf{b}\textrm{: a pure Nash equilibrium}}\frac{SW(\mathcal{OPT}(\mathbf{v}))}{SW(\mathbf{b})}.
        \end{eqnarray*}}
\end{itemize}
In order to characterize the influence of the maximum number of bid keywords, i.e., $\kappa$, we use \textit{expressiveness} to measure the capacity of the bidding language. The concept of expressiveness has been widely used in the literature of auction theory \cite{sandholm2007expressive,cramton2006combinatorial,lahaie2008expressive,boutilier2008expressive}, and its theoretical foundation has been established in \cite{benisch2008theory}. In this paper, we use a new notion of expressiveness, which we call the keyword-level (KL) expressiveness. As will be seen in later sections, the KL-expressiveness will affect both the social welfare and search engine revenue for the \textsf{PBM-GSP} mechanism. The formal definition of KL-expressiveness is given as below.
\begin{definition}
(Keyword-Level Expressiveness)
Given a valuation profile $\mathbf{v}$, we call the auction system $\beta$-KL-expressive, if for any advertiser $i$, $\kappa$ keywords can cover at least $\beta$ fraction of his/her positive keywords, i.e.,
$\kappa\geq \beta |\{ s: N_G(s)\cap Q_i\neq\emptyset\}|$.
We call an auction system $\beta$-KL-expressive (in the Bayesian setting), if for any valuation profile sampled from $\mathbf{F}$, the auction system is $\beta$-KL-expressive. When $\beta = 1$, we say the auction system is fully KL-expressive\footnote{In real sponsored search systems, the number of keywords that an advertiser can bid on is usually large enough to satisfy most of his/her needs. For example, in Google Adwords, advertisers are allowed to bid up to 3 million keywords, which can be regarded as quite a large number. In this case, we can consider the system as fully KL-expressive.
}.

\end{definition}

\subsection{Bayes-Nash Price of Anarchy}
In this subsection, we analyze the Bayes-Nash PoA for the \textsf{PBM-GSP} mechanism. We first extend the concept of homogeneity proposed in \cite{dhangwatnotai2011multi} to the Bayesian setting. We call the extended concept \emph{expected homogeneity}, which measures the diversity of advertisers' valuations on the queries matched to the same keyword in an expectation sense. For completeness, we list the definitions for both \emph{homogeneity} and \emph{expected homogeneity} as follows (in the full-information setting, expected homogeneity will trivially reduce to homogeneity).
\begin{definition}
(Homogeneity) \cite{dhangwatnotai2011multi}
A keyword $s$ is $c$-homogeneous if for every advertiser $i$ and two arbitrary queries $q_1,q_2\in N_G(s)$, $v_i(q_1)\leq cv_i(q_2)$. The auction system is $c$-homogeneous if every keyword $s\in\mathcal{S}$ is $c$-homogeneous.
\end{definition}
\begin{definition}
(Expected Homogeneity)
A keyword $s$ is $c$-expected-homogeneous if for any advertiser $i$, two arbitrary queries $q_1,q_2\in N_G(s)$, $P(v_i^{q_1}\leq c\mathbb{E}[v_i^{q_2}])=1$. The auction system is $c$-expected-homogeneous if every keyword $s\in\mathcal{S}$ is $c$-expected-homogeneous.
\end{definition}

We leverage the technique developed in \cite{2012POA_GSP}, which is used to analyze the PoA bound for the GSP auction.
\begin{lemma}
\cite{2012POA_GSP} We say that a game is $(\lambda,\mu)$-semi-smooth if for each player $i$ there exists some (possibly randomized) strategy $b'_i(\cdot)$ (depending only on the type of the player) such that $\sum_i \mathbb{E}_{b'_i(v_i)}[u_i(b'_i(v_i),b_{-i}] \geq \lambda \cdot SW(\mathcal{OPT}(v)) - \mu \cdot SW(b)$ holds for every pure strategy profile $b$ and every (fixed) type vector $v$ (The expectation is taken over the random bits of $b'_i(v_i)$).
If a game is $(\lambda, \mu)$-semi-smooth and its social welfare is at least the sum of the players' utilities, then the price of anarchy with uncertainty is at most $(\mu + 1)/\lambda$.
\end{lemma}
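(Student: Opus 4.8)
The plan is to derive the price-of-anarchy bound directly from the defining semi-smoothness inequality, using the Bayes-Nash equilibrium condition as the only extra ingredient. Fix any prior $\mathbf{F}$ and any Bayes-Nash equilibrium $\mathbf{b}(\cdot)$, and let $b'_i(\cdot)$ be the own-type-only deviation strategy guaranteed by semi-smoothness. First I would write down the equilibrium condition for each player $i$ and each type $v_i$, comparing the equilibrium bid $b_i(v_i)$ against the deviation $b'_i(v_i)$, namely $\mathbb{E}_{\mathbf{v}_{-i},\mathbf{b}}[u_i(b_i(v_i),\mathbf{b}_{-i}(\mathbf{v}_{-i}))|v_i] \geq \mathbb{E}_{\mathbf{v}_{-i},\mathbf{b}}[u_i(b'_i(v_i),\mathbf{b}_{-i}(\mathbf{v}_{-i}))|v_i]$. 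This is legitimate precisely because $b'_i$ depends only on $v_i$, so it is an admissible alternative strategy in the Bayesian game.

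Next I would take the expectation over $v_i$ and sum over all players, obtaining $\sum_i \mathbb{E}_{\mathbf{v},\mathbf{b}}[u_i(\mathbf{b}(\mathbf{v}))] \geq \sum_i \mathbb{E}_{\mathbf{v},b'_i}[u_i(b'_i(v_i),\mathbf{b}_{-i}(\mathbf{v}_{-i}))]$. The left-hand side is at most $\mathbb{E}_{\mathbf{v}}[SW(\mathbf{b}(\mathbf{v}))]$ by the hypothesis that social welfare dominates the sum of the players' utilities. It therefore remains to lower-bound the right-hand side using the semi-smoothness guarantee.

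The key step is to apply semi-smoothness pointwise. For each fixed realization $\mathbf{v}$, treat $\mathbf{b}(\mathbf{v})$ as a fixed pure profile and invoke the defining inequality with type vector $\mathbf{v}$ and pure profile $b = \mathbf{b}(\mathbf{v})$, which yields $\sum_i \mathbb{E}_{b'_i(v_i)}[u_i(b'_i(v_i),\mathbf{b}_{-i}(\mathbf{v}_{-i}))] \geq \lambda\, SW(\mathcal{OPT}(\mathbf{v})) - \mu\, SW(\mathbf{b}(\mathbf{v}))$. Taking $\mathbb{E}_{\mathbf{v}}$ of both sides, the left-hand side becomes exactly the right-hand side of the summed equilibrium inequality, while the right-hand side becomes $\lambda\,\mathbb{E}[SW(\mathcal{OPT}(\mathbf{v}))] - \mu\,\mathbb{E}[SW(\mathbf{b}(\mathbf{v}))]$. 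Chaining the three inequalities gives $\mathbb{E}[SW(\mathbf{b}(\mathbf{v}))] \geq \lambda\,\mathbb{E}[SW(\mathcal{OPT}(\mathbf{v}))] - \mu\,\mathbb{E}[SW(\mathbf{b}(\mathbf{v}))]$, and rearranging produces the claimed bound $(\mu+1)/\lambda$.

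The main obstacle is the bookkeeping in this last matching step: one must be careful that the pointwise, fixed-$\mathbf{v}$/fixed-pure-$b$ semi-smoothness guarantee, once integrated against the randomness of $\mathbf{v}$, reproduces exactly the deviation term appearing in the equilibrium inequality. This alignment works cleanly only because the deviation $b'_i$ is a function of $v_i$ alone; if it were permitted to depend on the entire type profile (as in ordinary smoothness), it would not correspond to a feasible Bayesian strategy and the chaining would break. I would also remark that independence of the $v_i$ is not actually required: the equilibrium condition is applied conditionally on $v_i$ and then averaged over $v_i$, so the argument is insensitive to how $\mathbf{v}_{-i}$ is correlated with $v_i$.
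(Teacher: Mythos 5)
Your proof is correct and is essentially the same argument as the source: this lemma is quoted from the cited reference rather than proved in the paper, and the standard proof chains the Bayes--Nash equilibrium condition against the type-only deviation $b'_i(\cdot)$, applies the semi-smoothness inequality pointwise to the realized pure profile $\mathbf{b}(\mathbf{v})$, and closes with the welfare-dominates-utilities hypothesis, exactly as you do (including the observations that the deviation is admissible only because it depends on $v_i$ alone, and that independence of types is not needed). The same chaining pattern is what the paper itself replays inside its proof of Theorem \ref{welfare1}, so there is nothing to add.
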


With the above definitions and lemmas, we give an upper bound for the Bayes-Nash PoA of the \textsf{PBM-GSP} mechanism.
\begin{theorem}\label{welfare1}
If the auction system is $\beta$-KL-expressive and $c$-expected-homogeneous, and the GSP auction is a $(\lambda,\mu)$-semi-smooth game, the Bayes-Nash PoA for the \textsf{PBM-GSP} mechanism is at most $c(\frac{\beta \mu + 1}{\beta\lambda })$.
\end{theorem}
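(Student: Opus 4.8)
The plan is to lift the $(\lambda,\mu)$-semi-smoothness of the single-keyword GSP auction to the whole \textsf{PBM-GSP} mechanism and then invoke the semi-smoothness lemma. Concretely, I will show that \textsf{PBM-GSP} is $(\beta\lambda/c,\ \beta\mu)$-semi-smooth relative to a keyword-level surrogate for the optimum, which by the lemma yields a Bayes-Nash PoA of $(\beta\mu+1)/(\beta\lambda/c)=c(\beta\mu+1)/(\beta\lambda)$, exactly the claimed bound.

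First I would exploit the fact, already used in Lemma \ref{convervativebidderlemma}, that conservative bidders never compete across keywords. Writing $W_s=\int_{q\in N_G(s)}\pi_q(s)\,dP$ and using $v_i^s=(\int_q \pi_q(s) v_i^q\,dP)/W_s$, a direct computation turns the per-keyword utility into a scaled standard GSP utility, $u_i^s(\mathbf{b})=W_s\,\hat u_i^s(\mathbf{b}^s)$, where $\hat u_i^s$ is the GSP utility on keyword $s$ with the keyword-level values $\{v_j^s\}_j$; likewise $SW(\mathbf{b})=\sum_s W_s\,\widehat{SW}^s(\mathbf{b}^s)$, a weighted sum of single-keyword GSP welfares. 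Thus each keyword is a rescaled GSP game, and I can apply the assumed $(\lambda,\mu)$-semi-smoothness keyword by keyword: for every $i$ there is a (possibly randomized) bid $\tilde b_i^s$ with $\sum_i \hat u_i^s(\tilde b_i^s,\mathbf{b}_{-i}^s)\ge \lambda\,\widehat{SW}^s_{\mathrm{OPT}}-\mu\,\widehat{SW}^s(\mathbf{b}^s)$, where $\widehat{SW}^s_{\mathrm{OPT}}=\sum_k w_k v^s_{[k]}$ uses the keyword-level order statistics.

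The next step is to assemble these per-keyword deviations into one feasible \textsf{PBM} deviation respecting the $\kappa$-keyword budget, and this is where $\beta$ enters. For player $i$ let $T_i=\{s:N_G(s)\cap Q_i\neq\emptyset\}$; $\beta$-KL-expressiveness gives $|T_i|\le \kappa/\beta$. The deviation $b_i'(\cdot)$ draws a uniformly random subset $R_i\subseteq T_i$ of size $\lceil\beta|T_i|\rceil\le\kappa$, bids $\tilde b_i^s$ on each $s\in R_i$ and $0$ elsewhere; every $s\in T_i$ then lies in $R_i$ with probability at least $\beta$. Since bidding $0$ yields nonnegative per-keyword utility, dropping the uncovered keywords and taking expectations gives $\sum_i\mathbb{E}[u_i(b_i',\mathbf{b}_{-i})]\ge \beta\sum_s W_s\sum_i \hat u_i^s(\tilde b_i^s,\mathbf{b}_{-i}^s)\ge \beta\lambda\sum_s W_s\widehat{SW}^s_{\mathrm{OPT}}-\beta\mu\,SW(\mathbf{b})$, using the keyword-level semi-smoothness and $\sum_s W_s\widehat{SW}^s(\mathbf{b}^s)=SW(\mathbf{b})$. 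This is exactly a semi-smoothness statement with parameters $(\beta\lambda,\beta\mu)$ relative to the surrogate optimum $\widetilde{SW}(\mathbf{v}):=\sum_s W_s\widehat{SW}^s_{\mathrm{OPT}}$.

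It remains to pay the homogeneity price relating $\widetilde{SW}(\mathbf{v})$ to the true optimum $SW(\mathcal{OPT}(\mathbf{v}))$, which I expect to be the main obstacle, especially reconciling realized and expected values in the Bayesian setting. For a fixed profile, $c$-homogeneity gives $v_i^q\le c\,v_i^s$ for every $q\in N_G(s)$, and a counting argument on order statistics upgrades this to $v^q_{[k]}\le c\,v^s_{[k]}$, so that $\int_q \pi_q(s) v^q_{[k]}\,dP\le c\,W_s v^s_{[k]}$ and hence $SW(\mathcal{OPT}(\mathbf{v}))\le c\,\widetilde{SW}(\mathbf{v})$. In the Bayesian setting $c$-expected-homogeneity only gives $v_i^q\le c\,\mathbb{E}[v_i^s]$ almost surely, so the same counting argument bounds the realized query-level statistics by $c$ times the order statistics of the means; I would then recover the comparison with $\mathbb{E}[\widetilde{SW}]$ by noting that $\sum_k w_k x_{[k]}=\max_{\sigma}\sum_k w_k x_{\sigma(k)}$ (over permutations $\sigma$) is convex in $x$ for decreasing $w_k$, so Jensen gives $\mathbb{E}[\widehat{SW}^s_{\mathrm{OPT}}]\ge \sum_k w_k (\mathbb{E}[v^s])_{[k]}$, and combining yields $\mathbb{E}[SW(\mathcal{OPT}(\mathbf{v}))]\le c\,\mathbb{E}[\widetilde{SW}(\mathbf{v})]$. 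Finally, running the standard semi-smooth-to-PoA argument at any Bayes-Nash equilibrium (the equilibrium deviation inequality, the hypothesis $SW\ge\sum_i u_i$, then substituting the homogeneity bound in expectation) gives $(1+\beta\mu)\,\mathbb{E}[SW(\mathbf{b})]\ge \beta\lambda\,\mathbb{E}[\widetilde{SW}]\ge (\beta\lambda/c)\,\mathbb{E}[SW(\mathcal{OPT})]$, i.e. the Bayes-Nash PoA is at most $c(\beta\mu+1)/(\beta\lambda)$.
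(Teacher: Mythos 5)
Your proposal is correct and takes essentially the same route as the paper's own proof: your surrogate $\widetilde{SW}(\mathbf{v})=\sum_s W_s\widehat{SW}^s_{\mathrm{OPT}}$ is exactly the paper's truthful-profile welfare $SW(\mathfrak{v})$, your random $\kappa$-subset deviation assembled from per-keyword semi-smooth deviations is precisely the paper's randomized strategy $b_i'(\cdot)$ (with $\beta$-KL-expressiveness guaranteeing each positive keyword is covered with probability at least $\beta$), and your expected-homogeneity step is the paper's second step. The only cosmetic differences are that you package the first step as $(\beta\lambda/c,\beta\mu)$-semi-smoothness of \textsf{PBM-GSP} relative to the surrogate before invoking the lemma, and you justify $\mathbb{E}\bigl[\sum_k w_k v^s_{[k]}\bigr]\geq \sum_k w_k (\mathbb{E}[v^s])_{[k]}$ via Jensen's inequality (convexity of $x\mapsto\max_\sigma\sum_k w_k x_{\sigma(k)}$) where the paper uses the equivalent explicit ordering/rearrangement argument.
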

To prove the theorem, we use the welfare generated from a truthfull bidding profile $\mathfrak{v}$ to connect the optimal welfare and the welfare in any Bayes-Nash equilibrium. Here the truthfull bidding profile $\mathfrak{v}$ denotes the situation when all advertisers bid their expected values on any keyword and there is no $\kappa$ constaint, i.e., $\mathfrak{v}_i^s = v_i^s$. In this situation, $SW(\mathfrak{v})$ equals $\int_{q\in \mathcal{Q}} \sum_{s \in N_G(q)} \pi_q(s)\sum_{k=1}^n w_k v^s_{[k]} dP$, where $v_{[k]}^s$ is the $k$-th largest value among all the expected valuations on keyword $s$.
\begin{proof}
We prove the theorem in two steps.
First, we bound the ratio between $\mathbb{E}_{\mathbf{v}, \mathbf{b}(\mathbf{v})}[SW(\mathbf{b}(\mathbf{v}))]$ and $\mathbb{E}_{\mathbf{v}}[SW (\mathfrak{v})]$,
and then bound the ratio between $\mathbb{E}_{\mathbf{v}}[SW (\mathfrak{v})]$ and $\mathbb{E}_{\mathbf{v}}[SW(\mathcal{OPT}(\mathbf{v}))]$. The proof details of the two steps are given below.

For the first step, we show if the GSP auction is a $(\lambda,\mu)$-semi-smooth game, for any Bayes-Nash equilibrium $\mathbf{b}(\cdot)$ of the \textsf{PBM-GSP} mechanism, the following bound holds,
{\small\begin{equation}\label{welfare-0}
\frac{\mathbb{E}_{\mathbf{v}}[SW (\mathfrak{v})]}{\mathbb{E}_{\mathbf{v}, \mathbf{b}(\mathbf{v})}[SW(\mathbf{b}(\mathbf{v}))]}<\frac{\beta \mu + 1}{\beta\lambda }.
\end{equation}}

Note that with the \textsf{PBM-GSP} mechanism, advertisers will not compete across keywords. For each advertiser $i$, define the utility on any positive keyword $s$ as $u^s_i(\mathbf{b}^s) = \int_{q\in N_{G}(s)}\pi_q(s)w_{\sigma_{s,\mathbf{b}}^{-1}(i)}(v_i^q- p_{ s,\mathbf{b}}(i) )dP$. By the defininition of $v_i^s$, this utility function can be rewritten as $u^s_i(\mathbf{b})= \int_{q\in N_{G}(s)}\pi_q(s) w_{\sigma_{s,\mathbf{b}}^{-1}(i)} (v^s_i- p_{s,\mathbf{b}}(i))dP$. Thus for this particular keyword, the advertiser's utility is exactly that for the GSP auction with true value defined as $v_i^s \int_{q\in N_{G}(s)}\pi_q(s)dP$.
Denote $S_i$ as the set of positive keywords for advertiser $i$.  Considering that the game within a given keyword is $(\lambda,\mu)$-semi-smooth, there must exist a (randomized) strategy $h_i^s(\cdot)$ on keyword $s$ satisfying, for every pure strategy $\mathbf{b}$,
{\small\begin{eqnarray}
&&\sum_{i:s\in S_i}\mathbb{E}_{h_i^s(v_i^s)}[u^s_i(h_i^s(v_i^s),\mathbf{b}^s_{-i}))] \geq \lambda SW (\mathfrak{v},s) - \mu SW(\mathbf{b},s),
\end{eqnarray}}
where $SW(\mathbf{b},s)$ is the welfare generated from keyword $s$, i.e., $SW(\mathbf{b},s) = \int_{q\in N_G(s)} \pi_q(s) \sum^n_{k=1} w_k v^s_{\sigma_{s,\mathbf{b}}(k)} dP$, and  $SW(\mathbf{b}) = \sum_{s\in \mathcal{S}} SW (\mathbf{b},s)$.

On this basis, we design a randomized strategy $b'_i(\cdot)$ for advertiser $i$ as follows. The randomized strategy $b'_i(\cdot)$ first randomly samples $\kappa$ keywords from $S_i$, and plays the strategy $h_i^s(\cdot)$ if keyword $s$ is sampled. Considering that the auction system is $\beta$-KL-expressive, the probability of any keyword $s \in S_i$ sampled by the strategy $b'_i(\cdot)$ is larger than $\beta$.

Then it is straightforward to attain
{\small\begin{eqnarray}
\mathbb{E}_{\mathbf{b}'(\mathbf{v})}[\sum_{i=1}^nu_i(b'_i (v_i),\mathbf{b} _{-i}(\mathbf{v}_{-i})))] &\geq& \beta \sum_{i=1}^n\sum_{s\in S_i}\mathbb{E}_{h^s_i(v^s_i)}[ u^s_i(h^s_i (v^s_i),\mathbf{b}^s_{-i}(\mathbf{v}_{-i})))]\nonumber\\
&=&\beta \sum_{s\in S} \sum_{i:s\in S_i} \mathbb{E}_{h^s_i(v^s_i)}[ u^s_i(h^s_i (v^s_i),\mathbf{b}^s_{-i}(\mathbf{v}_{-i}))))]\nonumber\\
&\geq& \beta \sum_{s\in S} (\lambda SW (\mathfrak{v},s) - \mu SW(\mathbf{b},s))\nonumber\\
&=&\beta\lambda SW (\mathfrak{v}) - \beta \mu SW(\mathbf{b}).
\end{eqnarray}}
Given the fact that the social welfare is at least the total utility of all the players, for any Bayes-Nash equilibrium $b(\cdot)$,we have
{\small\begin{eqnarray}
\mathbb{E}_{\mathbf{v},\mathbf{b}(\mathbf{v})}[SW(\mathbf{b}(\mathbf{v}))] &\geq&\mathbb{E}_{\mathbf{v},\mathbf{b}(\mathbf{v})}[\sum_{i=1}^n u_i(\mathbf{b}(\mathbf{v}))] \nonumber
\geq \mathbb{E}_{\mathbf{v},\mathbf{b}(\mathbf{v}),\mathbf{b}'(\mathbf{v})}[\sum_{i=1}^n \mathbb{E}_{b_i'(v_i)}[ u_i(b_i'(v_i),\mathbf{b}_{-i}(\mathbf{v}))]]\nonumber \\
&\geq& \beta\lambda\mathbb{E}_{\mathbf{v}}[ SW (\mathfrak{v})] - \beta \mu \mathbb{E}_{\mathbf{v},\mathbf{b}(\mathbf{v})}[ SW(\mathbf{b}(\mathbf{v}))]. \nonumber
\end{eqnarray}}
Then inequality (\ref{welfare-0}) follows.

For the second step, we show that $\frac{\mathbb{E}_{\mathbf{v}}[SW(\mathcal{OPT}(\mathbf{v}))]}{\mathbb{E}_{\mathbf{v}}[SW (\mathfrak{v})]} \leq c$.
Considering
{\small\begin{eqnarray}\label{welfare-1213}
\mathbb{E}_{\mathbf{v}}[SW(\mathcal{OPT}(\textbf{v}))] - c \mathbb{E}_{\mathbf{v}}[ SW (\mathfrak{v})]= \int_{q\in\mathcal{Q}}\sum_{s \in N_G(q)} \pi_q(s) (\sum_{k=1}^n \mathbb{E}_{\mathbf{v}}[w_k v_{[k]}^q] - c\sum_{k=1}^n \mathbb{E}_{\mathbf{v}}[w_k v_{[k]}^s]) dP,
\end{eqnarray}}

it suffices to prove for any keyword $s$ and any query $q \in N_G(s)$, $\sum_{k=1}^n \mathbb{E}_{\mathbf{v}}[w_k v_{[k]}^q] - c\sum_{k=1}^n \mathbb{E}_{\mathbf{v}}[w_k v_{[k]}^s] \leq 0$. Since the auction system is $c$-expected-homogeneous, the following result holds with probability one,
{\small\begin{eqnarray}
c\mathbb{E}_{\mathbf{v}}[v_i^s] &=& \mathbb{E}_{\mathbf{v}}[\frac{\int_{q'\in N_{G}(s)} c v_i^{q'}\pi_{q'}(s)dP}{\int_{q'\in N_{G}(s)} \pi_{q'}(s)dP}] = \frac{\int_{q'\in N_{G}(s)} c \mathbb{E}_{\mathbf{v}}[v_i^{q'}] \pi_{q'}(s)dP}{\int_{q'\in N_{G}(s)} \pi_{q'}(s)dP} \nonumber\\
&\geq& \frac{\int_{q'\in N_{G}(s)} v_i^q \pi_{q'}(s)dP}{\int_{q'\in N_{G}(s)} \pi_{q'}(s)dP} = v_i^q.
\end{eqnarray}}

Without loss of generality, we assume that $\mathbb{E}_{\mathbf{v}}[v_1^s] \geq \mathbb{E}_{\mathbf{v}}[v_2^s] \geq \cdots \mathbb{E}_{\mathbf{v}}[v_n^s]$ for keyword $s$. Then we have
{\small\begin{eqnarray}\label{welfare-1212}
\sum_{k=1}^n \mathbb{E}_{\mathbf{v}}[w_k v_{[k]}^q] - c\sum_{k=1}^n \mathbb{E}_{\mathbf{v}}[w_k v_{[k]}^s]&\leq& c\sum_{k=1}^n \mathbb{E}_{\mathbf{v}}[w_k \mathbb{E}_{\mathbf{v}}[v_k^s]] - c\sum_{k=1}^n \mathbb{E}_{\mathbf{v}}[w_k v_{[k]}^s]\nonumber \\
&=& c\mathbb{E}_{\mathbf{v}}[\sum_{k=1}^nw_k v_k^s - \sum_{k=1}^n w_k v_{[k]}^s]\leq 0.
\end{eqnarray}}
Applying (\ref{welfare-1212}) to (\ref{welfare-1213}), we can prove $\frac{\mathbb{E}_{\mathbf{v}}[SW(\mathcal{OPT}(\mathbf{v}))]}{\mathbb{E}_{\mathbf{v}}[SW (\mathfrak{v})]} \leq c$. Then the theorem follows by combining the two steps.
\end{proof}
In \cite{2012POA_GSP}, it is shown that the GSP auction is $(1-\frac{1}{e},1)$-semi-smooth. Furthermore, it is trivial to obtain that the GSP auction in the single-slot case is a $(1,1)$-semi-smooth game. Therefore, we can obtain the following two corollaries.
\begin{corollary}
If the auction system is $\beta$-KL-expressive and $c$-expected-homogeneous, the Bayes-Nash PoA for the \textsf{PBM-GSP} mechanism is at most $\frac{e}{e-1}\frac{\beta+1}{\beta}c$.
\end{corollary}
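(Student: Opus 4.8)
The plan is to derive this bound as an immediate consequence of Theorem~\ref{welfare1} by substituting the specific semi-smoothness parameters of the GSP auction. Theorem~\ref{welfare1} establishes that whenever the underlying GSP auction is $(\lambda,\mu)$-semi-smooth, the Bayes-Nash PoA of \textsf{PBM-GSP} is at most $c\big(\frac{\beta\mu+1}{\beta\lambda}\big)$ under the stated $\beta$-KL-expressiveness and $c$-expected-homogeneity assumptions. Hence the only additional input needed is a concrete pair $(\lambda,\mu)$ certifying that the GSP auction is semi-smooth, together with a check that both hypotheses of the theorem are in force.

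First I would invoke the result of \cite{2012POA_GSP} quoted just above the corollary, namely that the GSP auction is $(1-\frac{1}{e},1)$-semi-smooth, so that I may set $\lambda=1-\frac{1}{e}=\frac{e-1}{e}$ and $\mu=1$. The corollary's own hypotheses supply the $\beta$-KL-expressiveness and $c$-expected-homogeneity conditions, so Theorem~\ref{welfare1} applies verbatim with these parameter values. I would then simply simplify the resulting expression: plugging $\lambda=\frac{e-1}{e}$ and $\mu=1$ into $c\big(\frac{\beta\mu+1}{\beta\lambda}\big)$ gives $c\cdot\frac{\beta+1}{\beta\cdot\frac{e-1}{e}}=\frac{e}{e-1}\cdot\frac{\beta+1}{\beta}\cdot c$, which is exactly the claimed bound.

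Since every step is either a direct appeal to Theorem~\ref{welfare1} or a routine arithmetic simplification, there is no genuine obstacle in the corollary itself; all the real work already lives in the proof of Theorem~\ref{welfare1} (the two-step argument that first relates the equilibrium welfare to $SW(\mathfrak{v})$ through the per-keyword semi-smoothness and the $\beta$ sampling bound, and then relates $SW(\mathfrak{v})$ to the optimum via expected homogeneity) and in the external semi-smoothness certificate for GSP. The one point I would verify carefully is that the $(\lambda,\mu)$ values quoted from \cite{2012POA_GSP} match the semi-smoothness notion used in the lemma preceding Theorem~\ref{welfare1}, and that the single-keyword GSP subgame to which semi-smoothness is applied in that proof coincides with the multi-slot GSP setting for which the $(1-\frac{1}{e},1)$ bound was originally established.
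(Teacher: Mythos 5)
Your proposal is correct and matches the paper's own reasoning exactly: the paper also obtains this corollary by plugging the $(1-\frac{1}{e},1)$-semi-smoothness of GSP from \cite{2012POA_GSP} (i.e., $\lambda=\frac{e-1}{e}$, $\mu=1$) into the bound $c\left(\frac{\beta\mu+1}{\beta\lambda}\right)$ of Theorem \ref{welfare1}. The arithmetic simplification and the compatibility check you mention are all that is needed.
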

\begin{corollary}\label{welfare10}
If the auction system is $\beta$-KL-expressive and $c$-expected-homogeneous and there is only one slot to display ads, the Bayes-Nash PoA for the \textsf{PBM-GSP} mechanism is at most $\frac{\beta+1}{\beta}c$.
\end{corollary}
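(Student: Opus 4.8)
The plan is to obtain Corollary~\ref{welfare10} as an immediate specialization of Theorem~\ref{welfare1}. That theorem has already reduced the entire question to identifying the semi-smoothness parameters $(\lambda,\mu)$ of the per-keyword GSP game, since it guarantees a Bayes-Nash PoA bound of $c\bigl(\tfrac{\beta\mu+1}{\beta\lambda}\bigr)$. In the single-slot case the per-keyword GSP auction degenerates to a single-item second-price auction (only $w_1>0$, while $w_k=0$ for $k>1$), so the one thing I need to verify is that this auction is $(1,1)$-semi-smooth; substituting $\lambda=\mu=1$ into the theorem's bound then yields $c\bigl(\tfrac{\beta+1}{\beta}\bigr)$ directly.

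To establish $(1,1)$-semi-smoothness I would use the truthful deviation $b_i'(\cdot)=v_i^s$ for every player on the keyword $s$ under consideration, which legitimately depends only on the player's own expected keyword value and is therefore admissible. The crux is the single term corresponding to the player $i^\star$ holding the largest value $v_{[1]}^s$, since the single-slot truthful welfare on this keyword is $SW(\mathfrak{v},s)=w_1 v_{[1]}^s$. I would split into two cases according to whether $i^\star$ wins against the fixed profile $\mathbf{b}^s_{-i^\star}$: when $i^\star$ wins, its utility equals $w_1\bigl(v_{[1]}^s-\max_{j\neq i^\star}b_j^s\bigr)$, and when it loses its utility is nonnegative. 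In both cases the decisive fact is that bidders are conservative (Lemma~\ref{convervativebidderlemma}), so every realized bid satisfies $b_j^s\le v_j^s$; consequently the realized winner's value under $\mathbf{b}^s$, namely $SW(\mathbf{b},s)/w_1$, is at least the highest competing bid $\max_{j\neq i^\star}b_j^s$ (covering the winning case) and, whenever $i^\star$ loses, is also at least $v_{[1]}^s$ (covering the losing case). This yields $u_{i^\star}^s \ge SW(\mathfrak{v},s)-SW(\mathbf{b},s)$, and since all remaining truthful terms are nonnegative, the full semi-smoothness inequality with $\lambda=\mu=1$ follows.

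The main (and only genuine) obstacle is this semi-smoothness verification, which the text calls ``trivial''; the care needed is in respecting the restriction that each deviation may depend on the deviating player's type alone. This is precisely why I would take the all-truthful deviation rather than a data-dependent ``only the top bidder deviates'' strategy: the latter would require a player to know whether it holds the maximum value, information unavailable in the Bayesian setting, whereas with the all-truthful deviation the nonnegativity of the off-top terms absorbs the difference and the conservative-bidder property carries the key inequality. Once $(1,1)$-semi-smoothness is in hand, the corollary is a one-line substitution into Theorem~\ref{welfare1}.
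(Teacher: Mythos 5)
Your proposal is correct and takes essentially the same route as the paper: the paper also obtains this corollary by asserting that the single-slot GSP game (a second-price auction with conservative bidders) is $(1,1)$-semi-smooth and substituting $\lambda=\mu=1$ into Theorem~\ref{welfare1}. The only difference is that the paper dismisses the $(1,1)$-semi-smoothness as ``trivial to obtain,'' whereas you spell out the verification, correctly identifying that the case analysis for the top-value player hinges on the conservative-bidding assumption, exactly as in the paper's framework.
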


%%%%%%%%%%%%%%%%%%%%%%%%%%%%%%%%%%%%%%%%%%%%55
\subsection{Pure Price of Anarchy in Full-Information Setting}

In this subsection, we analyze the pure PoA for the \textsf{PBM-GSP} mechanism.
In particular, based on the notions of KL-expressiveness and homogeneity, we derive the following pure PoA bound.
\begin{theorem}\label{thm_POA_FULL}
If the auction system is $\beta$-KL-expressive and $c$-homogeneous, the pure PoA of \textsf{PBM-GSP} mechanism for the multi-slot case is at most $\frac{\beta+1}{\beta}c$.
\end{theorem}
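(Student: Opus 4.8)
The plan is to follow the same two-step decomposition used in the proof of Theorem~\ref{welfare1}, now specialized to the full-information setting, and to gain the sharper constant by exploiting a stronger per-keyword guarantee available for pure equilibria. Concretely, for any pure Nash equilibrium $\mathbf{b}$ I would write $\frac{SW(\mathcal{OPT}(\mathbf{v}))}{SW(\mathbf{b})}=\frac{SW(\mathcal{OPT}(\mathbf{v}))}{SW(\mathfrak{v})}\cdot\frac{SW(\mathfrak{v})}{SW(\mathbf{b})}$, where $\mathfrak{v}$ is the unconstrained truthful profile $\mathfrak{v}_i^s=v_i^s$ introduced before Theorem~\ref{welfare1}, and bound each factor separately.

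For the homogeneity factor I would show $\frac{SW(\mathcal{OPT}(\mathbf{v}))}{SW(\mathfrak{v})}\le c$ exactly as in the second step of Theorem~\ref{welfare1}, but invoking $c$-homogeneity in place of $c$-expected-homogeneity. Since $c$-homogeneity gives $v_i^q\le c\,v_i^s$ for every query $q\in N_G(s)$ (each $v_i^q$ is bounded by $c$ times every $v_i^{q'}$, hence by $c$ times their $\pi_q$-weighted average $v_i^s$), the $k$-th largest query value obeys $v_{[k]}^q\le c\,v_{[k]}^s$, and summing $w_k v_{[k]}^q\le c\,w_k v_{[k]}^s$ over slots and integrating over $q$ reproduces the bound. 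This step is deterministic, routine, and carries no dependence on $\beta$.

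For the remaining factor I would establish $\frac{SW(\mathfrak{v})}{SW(\mathbf{b})}\le\frac{\beta+1}{\beta}$, which is where both the $\kappa$-constraint and the improved GSP constants enter. The lifting through $\beta$-KL-expressiveness is identical to the first step of Theorem~\ref{welfare1}: each player $i$ deviates by sampling $\kappa$ keywords uniformly from $S_i$ and, on a sampled keyword $s$, plays a per-keyword deviation $h_i^s$; by $\beta$-KL-expressiveness every $s\in S_i$ is hit with probability at least $\beta$, so $\sum_i u_i(b_i',\mathbf{b}_{-i})\ge\beta\bigl(\lambda\,SW(\mathfrak{v})-\mu\,SW(\mathbf{b})\bigr)$, and the pure-Nash condition $u_i(\mathbf{b})\ge u_i(b_i',\mathbf{b}_{-i})$ together with $SW(\mathbf{b})\ge\sum_i u_i(\mathbf{b})$ yields $SW(\mathfrak{v})/SW(\mathbf{b})\le(\beta\mu+1)/(\beta\lambda)$. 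The new ingredient is that in full information the per-keyword GSP enjoys the constants $(\lambda,\mu)=(1,1)$ (the deviation may now depend on the publicly known profile) rather than the randomized $(1-\tfrac1e,1)$ guarantee of \cite{2012POA_GSP}; substituting $(1,1)$ gives precisely $\frac{\beta+1}{\beta}$, and multiplying by the homogeneity factor proves the claimed $\frac{(1+\beta)c}{\beta}$.

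The main obstacle is therefore producing the full-information per-keyword deviation $h_i^s$ certifying $\sum_{i:s\in S_i} u^s_i(h_i^s,\mathbf{b}^s_{-i})\ge SW(\mathfrak{v},s)-SW(\mathbf{b},s)$ for every pure profile on keyword $s$. With valuations now fixed and public, I would let the player occupying slot $k$ in the efficient keyword-allocation $\mathfrak{v}$ deviate to target that slot $k$, bidding just above the equilibrium bid currently at rank $k$; by the conservative-bidding property (Lemma~\ref{convervativebidderlemma}) and the equilibrium no-upward-deviation inequalities, the induced GSP prices can be charged against $SW(\mathbf{b},s)$ so that the payment terms telescope. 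The delicate point is controlling the reindexing of the other bidders this deviation causes and verifying that the summed deviation utilities dominate $SW(\mathfrak{v},s)$ minus exactly one copy of $SW(\mathbf{b},s)$; this is what separates the sharp $(1,1)$ full-information bound from the weaker randomized $(1-\tfrac1e,1)$ argument and is the crux of the theorem. Finally I would observe that in the single-slot case GSP collapses to a truthful second-price auction, giving $(\lambda,\mu)=(1,0)$ and hence the improved $c/\beta$ bound noted alongside the statement.
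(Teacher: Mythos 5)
Your proposal is sound, and it reaches the stated bound by a genuinely different route from the paper's own proof of Theorem~\ref{thm_POA_FULL}. The paper does \emph{not} reuse the smoothness template of Theorem~\ref{welfare1} here: instead it gives a deterministic, combinatorial argument. It compares the equilibrium allocation against the truthful allocation $\mathfrak{v}$ keyword-by-keyword, partitions advertisers into three classes ($I_1$: all $\kappa$ bids used and some truthful-win keyword missed; $I_2$: all $\kappa$ bids used, nothing missed; $I_3$: fewer than $\kappa$ bids), writes down explicit swap deviations (trade a bid keyword-position pair for a missed one) and position deviations, and then does a counting argument using $|S_i\setminus S_i'|\le \frac{\kappa}{\beta}-\kappa$ to arrive at $\kappa\, SW(\mathfrak{v})\le \frac{\kappa}{\beta}SW(\mathbf{b})+\kappa\, SW(\mathbf{b})$. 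You instead port the randomized keyword-sampling lift from the Bayes-Nash proof and combine it with a per-keyword, profile-dependent deviation achieving the $(1,1)$ guarantee --- this per-keyword step is exactly the classical ``pure PoA of conservative GSP is at most $2$'' argument (bid just above the $k$-th highest bid \emph{among the other players}, whence the payment is at most $b^s_{\sigma_{s,\mathbf{b}}(k)}\le v^s_{\sigma_{s,\mathbf{b}}(k)}$ by Lemma~\ref{convervativebidderlemma}), and your reindexing caveat is handled correctly by that formulation. Two small points of care: (i) the probability-$\beta$ lift requires the per-keyword deviation utilities to be nonnegative, so the deviation on a keyword must be capped (e.g., stay out when $v_i^s$ is below the targeted threshold) --- easily arranged, and the same issue is implicit in the paper's Theorem~\ref{welfare1}; (ii) as you note, this $(1,1)$ guarantee is \emph{not} semi-smoothness in the sense of the paper's lemma (the deviation depends on $\mathbf{b}$), so it certifies only pure PoA, which is all that is needed here. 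Your approach buys modularity --- full-information and Bayesian bounds fall out of one template, and any improved per-keyword pure-Nash constant lifts automatically --- whereas the paper's deterministic accounting is what enables the sharper single-slot bound of Theorem~\ref{welfare21}.

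On that last point, your closing remark that the single-slot case gives $(\lambda,\mu)=(1,0)$ ``and hence $c/\beta$'' does not hold as stated: no deviation can earn more than value minus payment, so $\sum_{i}u_i^s(h_i^s,\mathbf{b}^s_{-i})\ge SW(\mathfrak{v},s)$ with no $-SW(\mathbf{b},s)$ term is false in general (the deviator still pays the incumbent's bid). The paper's $c/\beta$ bound is obtained differently: restricting to undominated strategies so that keyword winners bid truthfully, and then charging the equilibrium utility of each $I_1$-advertiser to the welfare of the (distinct) keywords he actually wins, which absorbs the ``$+1$''. Since that remark concerns Theorem~\ref{welfare21} rather than the statement at hand, it does not affect the validity of your proof of Theorem~\ref{thm_POA_FULL}.
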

\begin{proof}
Similar to Theorem \ref{welfare1}, the proof of Theorem \ref{thm_POA_FULL} contains two steps. For the first step, we prove $\max_{\mathbf{v}, \mathbf{b}:\textrm{a pure Nash equilibrium}}\frac{SW (\mathfrak{v})}{SW(\mathbf{b})}<\frac{1}{\beta}+1$.

Denote $O_i$ as the set of (keyword, position) pair that advertiser $i$ wins when all advertisers truthfully bid, i.e., $O_i=\{(s,k) :\sigma_{s,\mathfrak{v}}(k) = i, w_k>0\}$, denote $S_i$ as the keyword set in $O_i$. Given any bid profile $\mathbf{b}$, denote $O'_i$ as the (keyword, position) set that advertiser $i$ actually bids and wins, i.e., $O'_i=\{(s,k):\sigma_{s,\mathbf{b}}(k)=i,w_k>0\}$, denote $S'_i$ as the set of keywords in $O'_i$, whose size is no larger than $\kappa$.

We divide advertisers into three categories, $I_1, I_2, I_3$: (1) advertisers in $I_1$ bid on $\kappa$ keywords and $S_i\setminus S_i'\neq\emptyset$; (2) advertisers in $I_2$ bid on $\kappa$ keywords and $S_i\setminus S_i'=\emptyset$; (3) advertisers in $I_3$ bid on fewer than $\kappa$ keywords. We apply the equilibrium conditions to the three categories respectively.

1.~For any advertiser $i$ in category $I_1$, by definition, advertiser $i$ wins a position in any keyword in $S'_i$.

So, first, advertiser $i$ will not increase his/her payoff by changing his/her strategy from bidding a keyword with position $(s',k')\in O'_i$ to any keyword $s\in S_i\setminus S'_i$ with position $k$, where $(s,k)\in O_i$. Considering all advertisers are conservative, we have
{\small\begin{eqnarray}\label{pure-1}
\int_{q\in N_G(s')}\pi_q(s')w_{k'} v_i^q dP &\geq& \int_{q\in N_G(s')}\pi_q(s')w_{k'}(v_i^q-p_{s,\mathbf{b}}(i)) dP \nonumber\\
&\geq& \int_{q\in N_G(s)}\pi_q(s)w_kv_i^q dP - \int_{q\in N_G(s)}\pi_q(s)w_k b_{\sigma_{s,\mathbf{b}}(k)}^s dP \nonumber\\
 &\geq& \int_{q\in N_G(s)}\pi_q(s)w_kv_i^q dP - \int_{q\in N_G(s)}\pi_q(s)w_k v_{\sigma_{s,\mathbf{b}}(k)}^s dP.
\end{eqnarray}}
Summing up both sides over all advertisers $i\in I_1$, $(s,k)\in O_i$ where $s\in S_i\setminus S'_i$ and $(s',k')\in O'_i$, we have
{\small\begin{eqnarray}\label{full-inf-proof-1}
&&\kappa\sum_{i\in I_1}\sum_{(s,k):s\in S_i\setminus S'_i,(s,k)\in O_i}\int_{q\in N_G(s)}\pi_q(s)w_kv_i^q dP \nonumber\\
\leq&& \sum_{i\in I_1}\sum_{(s,k):s\in S_i\setminus S'_i,(s,k)\in O_i}\sum_{(s',k'):(s',k')\in O'_i}\int_{q\in N_G(s')}\pi_q(s')w_{k'}v_i^qdP\nonumber\\
&+&\kappa\sum_{i\in I_1}\sum_{(s,k):s\in S_i\setminus S'_i,(s,k)\in O_i}\int_{q\in N_G(s)}\pi_q(s)w_k v_{\sigma_{s,\mathbf{b}}(k)}^s dP.
\end{eqnarray}}
Second, advertiser $i$ will not increase his/her payoff by changing his/her strategy from bidding on keyword $s$ with position $k'$ to bidding the same keyword with position $k$, where $(s,k')\in O'_i$, $(s,k)\in O_i$, and $s\in S_i\cap S'_i$. Similar to (\ref{pure-1}), we have
{\small\begin{eqnarray}
\int_{q\in N_G(s)}\pi_q(s)w_kv_i^q dP&\leq& \int_{q\in N_G(s)}\pi_q(s)w_{k'}v_i^qdP  + \int_{q\in N_G(s)}\pi_q(s)w_k v_{\sigma_{s,\mathbf{b}}(k)}^s dP.
\end{eqnarray}}
Summing up both sides over all advertisers $i\in I_1$, $(s,k)\in O_i$ and $(s,k')\in T'_i$ where $s\in S_i\cap S'_i$, we have
{\small\begin{eqnarray}\label{full-inf-proof-2}
\sum_{i\in I_1}\sum_{(s,k):s\in S_i\cap S'_i,(s,k)\in O_i}\int_{q\in N_G(s) }\pi_q(s)w_kv_i^q dP\leq&&\sum_{i\in I_1}\sum_{(s',k'):s'\in S_i\cap S'_i,(s',k')\in O'_i}\int_{q\in N_G(s')}\pi_q(s')w_{k'}v_i^qdP \nonumber\\
&+&\sum_{i\in I_1}\sum_{(s,k):s\in S_i\cap S'_i,(s,k)\in O_i}\int_{q\in N_G(s)}\pi_q(s)w_k v_{\sigma_{s,\mathbf{b}}(k)}^s dP.
\end{eqnarray}}
Summing up (\ref{full-inf-proof-1}) and $\kappa$ times (\ref{full-inf-proof-2}), we have
{\small\begin{eqnarray}
\kappa\sum_{i\in I_1}\sum_{(s,k)\in O_i}\int_{q\in N_G(s)}\pi_q(s) w_kv_i^q dP
\leq&& \kappa\sum_{i\in I_1}\sum_{(s',k'):s'\in S_i\cap S'_i,(s',k')\in O'_i}\int_{q\in N_G(s')}\pi_q(s') w_{k'}v_i^qdP\nonumber\\
&+&\sum_{i\in I_1}\sum_{(s,k):s\in S_i\setminus S'_i,(s,k)\in O_i}\sum_{(s',k'):(s',k')\in O'_i}\int_{q\in N_G(s')}\pi_q(s') w_{k'}v_i^qdP\nonumber\\
&+&\kappa\sum_{i\in I_1}\sum_{(s,k)\in O_i}\int_{q\in N_G(s)}\pi_q(s) w_k v_{\sigma_{s,\mathbf{b}}(k)}^s dP.
\end{eqnarray}}
Considering that $|S_i\setminus S'_i|\leq \frac{\kappa}{\beta}-\kappa$, and $\{(s',k'):s'\in S_i\cap S'_i,(s',k')\in O'_i\}\subset \{(s',k'):(s',k')\in O'_i\}$, we obtain
{\small\begin{eqnarray}\label{full-inf-proof-3}
\kappa\sum_{i\in I_1}\sum_{(s,k)\in O_i}\int_{q\in N_G(s)}\pi_q(s)w_kv_i^q dP\leq&& \frac{\kappa}{\beta}\sum_{i\in I_1}\sum_{(s',k')\in O'_i}\int_{q\in N_G(s')}\pi_q(s')w_{k'}v_i^qdP\nonumber\\
&+&\kappa\sum_{i\in I_1}\sum_{(s,k)\in O_i}\int_{q\in N_G(s)}\pi_q(s)w_k v_{\sigma_{s,\mathbf{b}}(k)}^s dP.
\end{eqnarray}}
2.~For advertiser $i$ in category $I_2$ and $I_3$, since $\mathbf{b}$ is a Nash equilibrium, it is clear that $S_i\subset S'_i$, and for $\forall s\in S_i$, $(s,k)\in O_i$, $(s,k')\in O'_i$, the following holds,
{\small\begin{eqnarray}
\int_{q\in N_G(s)}\pi_q(s)w_kv_i^q dP&\leq& \int_{q\in N_G(s)}\pi_q(s)w_{k'}v_i^qdP  + \int_{q\in N_G(s)}\pi_q(s)w_k v_{\sigma_{s,\mathbf{b}}(k)}^s dP.
\end{eqnarray}}
By summing over all advertisers $i\in I_2\cup I_3$, $(s,k)\in O_i$ and $(s,k')\in O'_i$, we have
{\small\begin{eqnarray}\label{full-inf-proof-4}
\kappa\sum_{i\in I_2\cup I_3}\sum_{(s,k)\in O_i}\int_{q\in N_G(s)}\pi_q(s)w_kv_i^q dP\leq && \kappa\sum_{i\in I_2\cup I_3}\sum_{(s',k'):(s',k')\in O'_i}\int_{q\in N_G(s')}\pi_q(s')w_{k'}v_i^qdP\nonumber\\
&+&\kappa\sum_{i\in I_2\cup I_3}\sum_{(s,k)\in O_i}\int_{q\in N_G(s)}\pi_q(s)w_k v_{\sigma_{s,\mathbf{b}}(k)}^s dP.
\end{eqnarray}}
Since $ SW (\mathfrak{v}) = \sum_{i}\sum_{(s,k)\in O_i}\int_{q\in N_G(s)}\pi_q(s)w_kv_i^q dP$, by summing (\ref{full-inf-proof-3}) and (\ref{full-inf-proof-4}) together, we obtain the following inequality and thus complete the first step.
{\small\begin{eqnarray}
\kappa SW (\mathfrak{v})&\leq&\frac{\kappa}{\beta}SW(\mathbf{b})+\kappa SW(\mathbf{b}).
\end{eqnarray}}
For the second step, it is easy to show $\frac{SW(\mathcal{OPT}(\mathbf{v}))}{ SW (\mathfrak{v})}\leq c$ still holds in the full-information setting. Then by combining the two steps, we prove the theorem.
\end{proof}
Note that the pure PoA bound given by the above theorem is for the general multi-slot case. The result can be further optimized if we are only interested in the single-slot case (see the following theorem). We leave the proof of the theorem to the Appendix.
\begin{theorem}\label{welfare21}
If the auction system is $\beta$-KL-expressive and $c$-homogeneous, and there is only one slot to display ad, the pure PoA for the \textsf{PBM-GSP} mechanism is at most $\frac{c}{\beta}$, and the bound is tight with respect to the factors.
\end{theorem}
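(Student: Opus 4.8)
The plan is to follow the two-step template of Theorem~\ref{thm_POA_FULL}, reusing its second step unchanged: the $c$-homogeneity argument already gives $SW(\mathcal{OPT}(\mathbf{v}))\le c\,SW(\mathfrak{v})$ in the full-information setting, so the whole task reduces to sharpening the first step from the general $\frac1\beta+1$ down to $\frac1\beta$, i.e.\ to proving $SW(\mathfrak{v})\le\frac1\beta SW(\mathbf{b})$ for every pure Nash equilibrium $\mathbf{b}$ when $w_k=0$ for $k\ge 2$. The single-slot structure is exactly what buys this improvement: by the corollary to Lemma~\ref{convervativebidderlemma}, bidding $b_i^s=v_i^s$ is weakly dominant on each keyword, so in any equilibrium every advertiser bids its expected value on the keywords it selects and the winner of keyword $s$ is simply the bidder with the largest $v_i^s$ among those bidding on $s$.

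I would organise the accounting around the partition of $\mathcal{S}$ induced by the truthful allocation. Let $S_i$ be the keywords that $i$ wins under $\mathfrak{v}$; since each keyword has a unique truthful winner, $\{S_i\}$ partitions $\mathcal{S}$, so that $\sum_i\sum_{s\in S_i}SW(\mathbf{b},s)=SW(\mathbf{b})$ \emph{exactly}. Let $B_i$ be the at most $\kappa$ keywords $i$ bids on in equilibrium; because $i$ never bids on a non-positive keyword, $B_i$ is contained in $i$'s positive keywords, of which there are at most $\kappa/\beta$ by $\beta$-KL-expressiveness. Two observations drive the argument. First, if $s\in S_i\cap B_i$ then $i$ also wins $s$ in equilibrium (it carries the top value and bids truthfully), so welfare is lost only on $s\in S_i\setminus B_i$. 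Second, an advertiser bidding on fewer than $\kappa$ keywords (or with $S_i\subseteq B_i$) can add any keyword for free and hence incurs no loss, whereas for an advertiser with $|B_i|=\kappa$ the sharp count $|S_i\setminus B_i|\le|\{\text{positive keywords of }i\}|-|B_i|\le \kappa/\beta-\kappa$ holds.

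For the charging step I would use the single-keyword deviation of Theorem~\ref{thm_POA_FULL}: for $s\in S_i\setminus B_i$, advertiser $i$ drops some $s'\in B_i$ and adds $s$; it then wins $s$ at the second price $v^s_{\sigma_{s,\mathbf{b}}(1)}$, so the deviation gain equals $SW(\mathfrak{v},s)-SW(\mathbf{b},s)$ and equilibrium gives $SW(\mathfrak{v},s)-SW(\mathbf{b},s)\le u_i^{s'}(\mathbf{b})$ for every $s'\in B_i$. Taking the minimum over $s'$ yields $SW(\mathfrak{v},s)-SW(\mathbf{b},s)\le U_i/\kappa$ with $U_i=\sum_{s'\in B_i}u_i^{s'}(\mathbf{b})$, and summing over the at most $\kappa/\beta-\kappa$ such $s$ gives $\sum_{s\in S_i\setminus B_i}\big(SW(\mathfrak{v},s)-SW(\mathbf{b},s)\big)\le(\tfrac1\beta-1)U_i$. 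Writing $SW(\mathfrak{v})=\sum_i\sum_{s\in S_i}SW(\mathbf{b},s)+\sum_i\sum_{s\in S_i\setminus B_i}\big(SW(\mathfrak{v},s)-SW(\mathbf{b},s)\big)$, the partition identity turns the first sum into $SW(\mathbf{b})$ and $\sum_i U_i\le SW(\mathbf{b})$ bounds the second, so $SW(\mathfrak{v})\le SW(\mathbf{b})+(\tfrac1\beta-1)SW(\mathbf{b})=\tfrac1\beta SW(\mathbf{b})$; chaining with step two gives the pure PoA bound $c/\beta$.

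The main obstacle is pinning the constant at exactly $1/\beta$ rather than $1+1/\beta$: the naive bound $|S_i\setminus B_i|\le\kappa/\beta$, or any charging that also lets the equilibrium welfare on $S_i$ be counted on top of the gains, reproduces the weaker multi-slot constant, so the proof must lean simultaneously on the exact identity $\sum_i\sum_{s\in S_i}SW(\mathbf{b},s)=SW(\mathbf{b})$ and on the extra $-\kappa$ in the count. For tightness I would build a single-slot instance in which each advertiser has exactly $\kappa/\beta$ positive keywords while the worst equilibrium forces it to bid on a size-$\kappa$ set that omits its highest-value keywords, so a $1-\beta$ fraction of the keyword-level welfare is abandoned to low-value winners and the deviation incentives are tuned so that no single swap improves --- making step one tight at $1/\beta$ --- and simultaneously letting the query valuations inside each keyword spread by a factor $c$ so that the homogeneity step is tight; placing the two gadgets on disjoint query coordinates drives the ratio to $c/\beta$ and shows that neither $c$ nor $1/\beta$ can be dropped.
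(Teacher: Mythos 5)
Your proof of the upper bound is correct and, for that part, essentially identical to the paper's: the same two-step reduction through the truthful profile $\mathfrak{v}$, the same partition of keywords by their truthful winner, the same case split between saturated and unsaturated advertisers, the same count $|S_i\setminus B_i|\le \kappa/\beta-\kappa$, and the same single-keyword swap deviation. Your only departure is bookkeeping: you charge the deviation loss to the equilibrium utilities $u_i^{s'}(\mathbf{b})$ and close with $\sum_i U_i\le SW(\mathbf{b})$, whereas the paper first drops payments and charges to the welfare of the keywords advertiser $i$ wins in equilibrium (which are disjoint across advertisers); the two are interchangeable.

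The genuine gap is in the tightness claim. Placing the $1/\beta$-gadget and the $c$-gadget on disjoint query coordinates cannot ``drive the ratio to $c/\beta$'': since the \textsf{PBM-GSP} game decomposes across disjoint keyword sets, the equilibria of the union are products of equilibria of the pieces, and the PoA of the union is a welfare-weighted average of the two PoAs, hence at most $\max(c,1/\beta)$ --- never the product. Moreover, the two losses genuinely resist multiplying: to make the optimal welfare on an \emph{abandoned} keyword exceed its truthful welfare by a factor of $c$, homogeneity forces the existence of many advertisers whose expected values on that keyword are comparable to the truthful winner's; in equilibrium each of them must be deterred from bidding there (otherwise the keyword is not abandoned), which requires each to hold comparable utility elsewhere, and that utility is counted inside $SW(\mathbf{b})$, washing out the $1/\beta$ loss. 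So you should not claim a single instance with ratio $c/\beta$. What your two gadgets do establish --- and what the statement actually asserts (the abstract reads ``tight with respect to each factor'') --- is separate tightness: your first gadget gives PoA $\to 1/\beta$ with $c=1$, and your second (one keyword, $m$ queries, $m$ advertisers with complementary value peaks, $m\to\infty$) gives PoA $\to c$ with $\beta=1$. State the tightness that way and delete the product claim; note, for comparison, that the paper's own text proves only the upper bound and never exhibits the tightness examples at all.
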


\begin{proof}
It is easy to see for any advertiser $i$, bidding the expected value $v_i^s$ on $s$ is the dominant strategy if he/she bids $s$ in single-slot setting, thus it suffices to consider the equilibria in which each winner of each keyword bid the true value. Similar to Theorem \ref{thm_POA_FULL}, our proof contains two steps.

For the first step, we prove that for an arbitrarily given valuation profile $\mathbf{v}$,
{\small\begin{equation}
\max_{\mathbf{b}:\textrm{a pure Nash equilibrium}}\frac{SW (\mathfrak{v})}{SW(\mathbf{b})}<\frac{1}{\beta}.
\end{equation}}

Denote $S_i$ as the keyword set that advertiser $i$ wins when all advertisers truthfully bid, i.e., $S_i=\{s :\sigma_{s,\mathfrak{v}}(1)=i\}$. Given any bid profile $\mathbf{b}$, denote $S'_i$ as the keyword set that advertiser $i$ actually bids on, i.e., $S'_i=\{s:b_i^s>0\}$, whose size is no larger than $\kappa$.

We divide advertisers into three categories, $I_1,I_2,I_3$: (1) advertisers in $I_1$ bid on $\kappa$ keywords and $S_i\setminus S_i'\neq\emptyset$; (2) advertisers in $I_2$ bid on $\kappa$ keywords and $S_i\setminus S_i'=\emptyset$; (3) advertisers in $I_3$ bid on fewer than $\kappa$ keywords. We apply the equilibrium conditions to the three categories respectively.

1.~For any advertiser $i$ in category $I_1$, since $S_i\setminus S_i'\neq\emptyset$, it is easy to show that advertiser $i$ wins all keywords in $S'_i$ (otherwise, alternatively bidding on a keyword in $S_i\setminus S'_i$ will lead to a better payoff), which yields
{\small\begin{equation}\label{appendix-full-info-welf-07}
\sigma_{s',\mathbf{b}}(1)=i, b_i^{s'} = v_i^{s'}. \forall s'\in S'_i, \forall i\in I_1.
\end{equation}}
Moreover, advertiser $i$ will not increase his/her payoff by changing his/her strategy from bidding on $s'\in S'_i$ to any $s\in S_i\setminus S'_i$, that is,
{\small\begin{eqnarray}\label{appendix-full-info-welf-08}
\int_{q\in N_G(s)}\pi_q(s)(v_i^q-b_{\sigma_{s,\mathbf{b}}(1)}^s)dP&\leq& \int_{q\in N_G(s')}\pi_q(s')(v_i^q-p_{s,\mathbf{b}}(i))dP.
\end{eqnarray}}
By dropping $ p_{s,\mathbf{b}}(i)$ from the RHS of (\ref{appendix-full-info-welf-08}) which is non-negative and using the fact that $b_{\sigma_{s,\mathbf{b}}(1)}^s \leq v_{\sigma_{s,\mathbf{b}}(1)}^s$, we have
{\small\begin{eqnarray*}
\int_{q\in N_G(s)}\pi_q(s)v_i^qdP&\leq& \int_{q\in N_G(s')}\pi_q(s')v_i^qdP + \int_{q\in N_G(s)}\pi_q(s)v_{\sigma_{s,\mathbf{b}}(1)}^q dP.
\end{eqnarray*}}
Summing up both sides over all advertisers $i\in I_1$, $s\in S_i\setminus S'_i$ and $s'\in S'_i$, we have,
{\small\begin{eqnarray}\label{appendix-full-info-welf-09}
\kappa\sum_{i\in I_1}\sum_{s\in S_i\setminus S'_i}\int_{q\in N_G(s)}\pi_q(s)v_i^qdP&\leq& \sum_{i\in I_1}\sum_{s\in S_i\setminus S'_i}\sum_{s'\in S'_i}\int_{q\in N_G(s')}\pi_q(s')v_i^qdP\nonumber\\
&+&\kappa\sum_{i\in I_1}\sum_{s\in S_i\setminus S'_i}\int_{q\in N_G(s)}\pi_q(s)v_{\sigma_{s,\mathbf{b}}(1)}^q dP.
\end{eqnarray}}
Considering that $|S_i\setminus S'_i|\leq \frac{\kappa}{\beta}-\kappa$, we apply (\ref{appendix-full-info-welf-07}) to the first term in the RHS of (\ref{appendix-full-info-welf-09}), which yields,
{\small\begin{eqnarray}\label{appendix-full-info-welf-10}
\kappa\sum_{i\in I_1}\sum_{s\in S_i\setminus S'_i}\int_{q\in N_G(s)}\pi_q(s)v_i^q dP
&\leq& (\frac{\kappa}{\beta}-\kappa)\sum_{i\in I_1}\sum_{s'\in S'_i}\int_{q\in N_G(s')}\pi_q(s')v_{\sigma_{s',\mathbf{b}}(1)}^q dP \nonumber\\
&+&\kappa\sum_{i\in I_1}\sum_{s\in S_i\setminus S'_i}\int_{q\in N_G(s)}\pi_q(s) v_{\sigma_{s,\mathbf{b}}(1)}^q dP.
\end{eqnarray}}
Since $\sum_{i\in I_1}\sum_{s'\in S'_i}\int_{q\in N_G(s')}\pi_q(s')v_{\sigma_{s',\mathbf{b}}(1)}^q dP\leq SW(\mathbf{b})$, by adding $\kappa\sum_{i\in I_1}\sum_{s\in S_i\cap S'_i}\int_{q\in N_G(s)}\pi_q(s)v_i^qdP$ to both sides of (\ref{appendix-full-info-welf-10}), we have
{\small\begin{eqnarray}\label{appendix-full-info-welf-11}
\kappa\sum_{i\in I_1}\sum_{s\in S_i}\int_{q\in N_G(s)}\pi_q(s)v_i^qdP
\leq (\frac{\kappa}{\beta}-\kappa)SW(\mathbf{b}) +\kappa\sum_{i\in I_1}\sum_{s\in S_i}\int_{q\in N_G(s)}\pi_q(s)v_{\sigma_{s,\mathbf{b}}(1)}^q dP.
\end{eqnarray}}

2.~For advertiser $i$ in category $I_2$ and $I_3$, since $\mathbf{b}$ is a Nash equilibrium, it is clear that $S_i\subset S'_i$, and for $\forall s\in S_i$, $\sigma_{s,\mathbf{b}}(1)=i$. Therefore
{\small\begin{equation}\label{appendix-full-info-welf-12}
\kappa\sum_{i\in I_2\cup I_3}\sum_{s\in S_i}\int_{q\in N_G(s)}\pi_q(s)v_i^q dP=\kappa\sum_{i\in I_2\cup I_3}\sum_{s\in S_i}\int_{q\in N_G(s)}\pi_q(s)v_{\sigma_{s,\mathbf{b}}(1)}^q dP.
\end{equation}}
According to the definitions of $SW(\mathfrak{v})$ and $SW(\mathbf{b})$, we can prove $\max_{\mathbf{b}\textrm{:a pure Nash equilibrium}}\frac{SW (\mathfrak{v})}{SW(\mathbf{b})}<\frac{1}{\beta}$ by summing up (\ref{appendix-full-info-welf-11}) (\ref{appendix-full-info-welf-12}) together.
\\

For the second step, we have $\frac{SW(\mathcal{OPT}(\mathbf{v}))}{SW(\mathfrak{v})}\leq c$. Then by combining the two steps, we prove the theorem.
\end{proof}

\subsection{Comparison between \textsf{PBM-GSP} and \textsf{SBM-GSP}}
In this subsection, we make comparisons between \textsf{PBM-GSP} and \textsf{SBM-GSP}. The overall conclusion is that the \textsf{PBM-GSP} mechanism has a better social welfare in equlibirium than the \textsf{SBM-GSP} mechanism. The detailed analysis is given as follows.

To the best of our knowledge, the theoretical analysis on $\textsf{SBM-GSP}$ \cite{dhangwatnotai2011multi} only covers the welfare in the full-information setting and the single-slot case. Therefore, we will compare \textsf{PBM-GSP} with \textsf{SBM-GSP} in this setting. Furthermore, in \cite{dhangwatnotai2011multi}, the same definition of \textit{homogeneity} but a different definition of \textit{expressiveness} is used. To avoid confusions, we refer to the expressiveness defined in \cite{dhangwatnotai2011multi} as \textit{Query-Level (QL) Expressiveness}, whose definition is copied as follows.

\begin{definition}\cite{dhangwatnotai2011multi}
(QL-Expressiveness)
We call an auction system $\alpha$-QL-expressive, if for any advertiser $i$, and any query set $Q$ satisfying $Q\subset Q_i$, $|Q|\leq \alpha |Q_i|$, there always exist $\kappa$ keywords that can cover $Q$ through the query-keyword bipartite graph $G$. When $\alpha=1$, we say the auction system is fully QL-expressive.
\end{definition}
Based on the above concepts, an almost-tight pure PoA bound for the \textsf{SBM-GSP} mechanism in the single-slot case is derived in \cite{dhangwatnotai2011multi}, as shown below.
\begin{theorem}\cite{dhangwatnotai2011multi}
If the auction system is $\alpha$-QL-expressive and $c$-homogeneous, the pure PoA of the \textsf{SBM-GSP} mechanism is at most $\frac{c^2+c}{\alpha}$.
\end{theorem}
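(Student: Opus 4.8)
The plan is to follow the full-information, single-slot deviation method: fix an arbitrary pure Nash equilibrium $\mathbf{b}$ and the optimal (highest-value) allocation, and exhibit for each advertiser a deviation whose unprofitability, forced by the equilibrium condition, yields $SW(\mathcal{OPT}(\mathbf{v}))\le\frac{c^2+c}{\alpha}SW(\mathbf{b})$. For each query $q$ write $o(q)$ for its optimal winner, so $v_{o(q)}^q=\max_i v_i^q$, and $g(q)$ for the equilibrium winner under the query-level bids $b_i^q=\max_{s\in N_G(q)}b_i^s$ that \textsf{SBM-GSP} computes. The two quantities I need to control are the equilibrium prices, which set how much a deviator must pay, and the value an advertiser can recover by bidding onto keywords covering his optimal queries.

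First I would bound every query-level bid by the bidder's own value. Since a conservative per-keyword bid never exceeds the bidder's valuation over the queries matched to that keyword, and $c$-homogeneity gives $v_k^{q'}\le c\,v_k^q$ whenever $q,q'\in N_G(s)$, one obtains $b_k^q=\max_{s\in N_G(q)}b_k^s\le c\,v_k^q$ for every advertiser $k$ and query $q$. In particular the GSP price on $q$ (the runner-up query-bid) is at most $c\,v_{g(q)}^q$, so total equilibrium revenue is at most $c\cdot SW(\mathbf{b})$. For the deviation, let $Q_i^{\star}=\{q:o(q)=i\}\subseteq Q_i$. By $\alpha$-QL-expressiveness there exist $\kappa$ keywords covering an $\alpha$-fraction of $Q_i$, chosen to capture the most valuable part of $Q_i^{\star}$, and the deviation has advertiser $i$ bid on exactly these keywords, each bid set just above the incumbent's blocking query-bid $b_{g(q)}^q$ so that $i$ wins the targeted optimal queries. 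The equilibrium inequality $u_i(\mathbf{b})\ge u_i(b_i',\mathbf{b}_{-i})$ then lower-bounds $u_i(\mathbf{b})$ by the recovered optimal value minus the prices paid; summing over $i$ and using $\sum_i u_i(\mathbf{b})=SW(\mathbf{b})-\text{revenue}\le SW(\mathbf{b})$ together with the revenue bound from the first step closes the argument.

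The main obstacle is precisely the query-keyword coupling that \textsf{SBM-GSP} introduces and \textsf{PBM-GSP} avoids: a single per-keyword deviation bid is shared by \emph{every} query matched to that keyword and can be calibrated to the targeted query only up to the homogeneity factor $c$. A clean accounting in which $i$ wins only his targeted queries and pays exactly the incumbent bid would already give the stronger $\frac{1+c}{\alpha}$; the subtlety is that bidding $B\approx c\,v_i^q$ on a keyword $s$ also captures co-matched queries $q'\in N_G(s)$ on which $i$ may overpay. These losses must be charged against something, and since $B\le c\,v_i^q\le c^2\,v_i^{q'}$ by a second application of homogeneity ($v_i^q\le c\,v_i^{q'}$), tracking the recovered value net of these coupling losses is exactly what inflates the single factor $c$ of the \textsf{PBM} analysis into $c^2$; the additive $c$ is the equilibrium-revenue term, and the $1/\alpha$ scaling arises because expressiveness controls only an $\alpha$-fraction of each $Q_i$. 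Getting the two homogeneity applications to compose correctly while the deviation's utility lower bound absorbs the co-matched overpayments is the step I expect to require the most care, and it is what the paper exploits to argue that \textsf{PBM-GSP} strictly improves on \textsf{SBM-GSP} by a factor of $c$.
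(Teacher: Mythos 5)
Note first that the paper does not actually prove this theorem: it is imported verbatim from the cited reference \cite{dhangwatnotai2011multi}, so there is no in-paper proof to compare against, and your argument has to stand on its own. The setup half of your proposal is sound: conservativeness plus $c$-homogeneity do give $b_k^q=\max_{s\in N_G(q)}b_k^s\le c\,v_k^q$, hence equilibrium prices at most $c\,v_{g(q)}^q$, and QL-expressiveness lets advertiser $i$ cover with $\kappa$ keywords $K_i$ a subset $T_i\subseteq Q_i^{\star}$ carrying at least an $\alpha$-fraction of $\int_{Q_i^{\star}}v_i^q\,dP$. The genuine gap is the deviation itself. Bidding ``just above the incumbent's blocking query-bid'' fails for two reasons. (i) Such a bid can be as large as $c\,v_i^q$, which may exceed $\max_{q'\in N_G(s)}v_i^{q'}$; it is then a dominated (non-conservative) strategy, and since the equilibrium concept restricts players to conservative bids, the inequality $u_i(\mathbf{b})\ge u_i(b_i',\mathbf{b}_{-i})$ is not available for this deviation. (ii) More fundamentally, the co-matched overpayments cannot be ``charged against something.'' Your bound, loss $\le B-v_i^{q'}\le(c^2-1)v_i^{q'}$, controls each loss by the co-matched query's own value to $i$, but these queries lie outside $T_i$, so they do not net against the recovered value; and when the $n$ equilibrium inequalities are summed, the sets $T_i$ are disjoint across advertisers (which is exactly why $\sum_i\int_{T_i}v_{g(q)}^q\,dP\le SW(\mathbf{b})$), whereas the co-matched sets $\bigcup_{s\in K_i}N_G(s)\setminus T_i$ are not: a single query can absorb losses from up to $n$ deviators. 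The aggregate loss is therefore only bounded by roughly $n$ (or the maximum degree of $G$) times $c\,SW(\mathbf{b})$, and no bound of the form $\frac{c^2+c}{\alpha}$ follows. (Also, the revenue bound $\mathrm{Rev}\le c\,SW(\mathbf{b})$ you derive is never actually needed; the summation step only uses $\sum_i u_i(\mathbf{b})=SW(\mathbf{b})-\mathrm{Rev}\le SW(\mathbf{b})$, i.e.\ $\mathrm{Rev}\ge 0$.)

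The repair -- and the place where the two applications of homogeneity genuinely compose into $c^2$ -- is to make the deviation defensive rather than aggressive: on each covering keyword $s\in K_i$, bid $B_s=\frac{1}{c}\max_{q\in T_i\cap N_G(s)}v_i^q$. Homogeneity gives $B_s\le v_i^{q'}$ for every $q'\in N_G(s)$, so this deviation is conservative and every co-matched win has non-negative utility; those terms can simply be dropped instead of charged. The cost is that $i$ may now lose a targeted query $q\in T_i$, but in that event some rival's query-bid on $q$ exceeds $\frac{1}{c}v_i^q$, and conservativeness plus homogeneity convert that bid back into a value: $c\,v_{g(q)}^q\ge\frac{1}{c}v_i^q$, i.e.\ $v_{g(q)}^q\ge\frac{1}{c^2}v_i^q$. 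Unifying the win/lose cases gives $u_i^q(\mathrm{dev})\ge(v_i^q-c^2v_{g(q)}^q)P(q)$ on each targeted query; summing over $T_i$ and over $i$, and using $\sum_i u_i(\mathbf{b})\le SW(\mathbf{b})$ together with the disjointness of the $T_i$, yields $\alpha\,SW(\mathcal{OPT}(\mathbf{v}))\le(1+c^2)\,SW(\mathbf{b})$, which implies (indeed slightly improves) the stated $\frac{c^2+c}{\alpha}$ bound. Your instinct that the co-matching is ``the step requiring the most care'' was right; the resolution is to scale the bid down by $c$ so the problem never arises, not to bid up and absorb the losses afterwards.
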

If we compare this PoA bound with the corresponding PoA bound of the \textsf{PBM-GSP} mechanism, we will have the following discussions.

First, since different notions of expressiveness are used, if we want to compare the bounds, we need to characterize the relationship between KL-expressiveness and QL-expressiveness. Actually, a natural question is why not also using the QL-expressiveness to analyze the theoretical properties of \textsf{PBM-GSP}. The following example, which shows that the pure PoA of the \textsf{PBM-GSP} mechanism could be irrelevant to QL-expressiveness, justifies the necessity to introduce the concept of KL-expressiveness.
\begin{example}
Suppose there is only one advertiser who is allowed to bid on at most one keyword. Consider there is a fixed set of positive queries, and each query is matched to a shared keyword and other $N$ different keywords. We consider a \textsf{PBM-GSP} mechanism that matches a query to keywords with uniform probability. In this case, the auction system is always fully QL-expressive since the advertiser can use the shared keyword to reach all queries, but the welfare in equilibrium can be arbitrarily bad in \textsf{PBM-GSP} as $N$ approaches infinity.
\end{example}

Furthermore, according to our theoretical and empirical studies (details are given in the Appendix), given the query-keyword bipartite graph, the QL-expressiveness $\alpha$ and KL-expressiveness $\beta$ are actually comparable in their values (i.e., they only differ by a small constant). Therefore the difference in these two notions of expressiveness should not affect the comparison between the two PoA bounds by much.

Second, the two PoA bounds have different orders with respect to the homogeneity $c$. As aforementioned, homogeneity describes the diversity of advertiser's valuations on different queries matched to the same keyword. Take the keyword ``spider'' as an example. It can be matched to multiple queries, such as ``spider movie'', ``get rid of spider'', and ``crystal spider'', which have quite different semantic meanings. If each advertiser is only interested in one type of these semantic meanings, the homogeneity quantity $c$ will be very large due to the high valuations on some queries and the low valuations on the other queries. In this case, different orders of $c$ will lead to significant difference in the overall PoA bounds. In particular, the pure PoA bound of \textsf{PBM-GSP} is much better than that of \textsf{SBM-GSP}, since the former is linear to $c$ but the latter is quadratic.

\section{Revenue Analysis}
In this section, we study the Bayes-Nash revenue with reserve price \cite{1981myerson,dhangwatnotai2010revenue, hartline2010bayesian} for the \textsf{PBM-GSP} mechanism. We show that with a naturally-defined reserve price $r_s$ on each keyword $s$, the worst-case ratio between the optimal social welfare and the revenue of \textsf{PBM-GSP}, defined as below, can be upper bounded.
$$\max_{\mathbf {v}, \mathbf{b}(\cdot)\textrm{: a Bayes-Nash equilibrium}}\frac{\mathbb{E}_{\mathbf{v}}[SW(\mathcal{OPT}(\mathbf{v}))]}{\mathbb{E}_{\mathbf{v}, \mathbf{b}(\mathbf{v})}[\mathcal{R}_r(\mathbf{b}(\mathbf{v}))]},$$
where $\mathcal{R}_r(\mathbf{b})$ presents the revenue with bid $\mathbf{b}$ and reserve price vector $r$, i.e.,$\mathcal{R}_r(\mathbf{b})= \int_{q\in \mathcal{Q}} \sum_{s \in N_G(q)} \pi_q(s)\sum_{k=1}^n w_k \max\{r_s,b^s_{\sigma_{s,\mathbf{b}}(k+1)}\}\mathbb{I}[ b^s_{\sigma_{s,\mathbf{b}}(k)}\geq r_s] dP$.

In this paper, we assume that the auctioneer (search engine) has a public prior distribution $F$ on any advertiser's valuation vector, and any advertiser $i$'s valuation vector $v_i$ is i.i.d. sampled from this distribution, i.e., $\mathbf{F} = F^n$. Considering that for any advertiser, the valuation on keywords are the weighted averages of the valuations on the queries that the keyword can be matched to, it could be easily proved that an advertiser's expected valuation vector on keywords can also be considered as i.i.d. sampled. Thus we define, for any advertiser, the keyword valuation vector is i.i.d sampled from a distribution $T$ (induced from $F$ and the mechanism), and define the (marginal) cumulate distribution function of valuation on keyword $s$ as $T_s$ and the probability density function on keyword $s$ as $t_s$. As in common practice, we consider a particular class of distributions for $T$, which is called monotone hazard rate (MHR) distribution \cite{dhangwatnotai2010revenue,hartline2009simple,lucier2012revenue}.

For the reserve price, we employ a naturally-defined reserve price vector $r$, which is a direct extension of the Myerson reserve price  \cite{lucier2012revenue,chawla2007algorithmic,1981myerson} : For any keyword $s$, the reserve price $r_s$ is the Myerson reserve price, which satisfies $\phi_s(r_s) = 0$, where $\phi_s(v)$ is defined as the virtual value of any type $v$ on keyword $s$, i.e., $\phi_s(v) \triangleq v - \frac{1-T_s(v)}{t_s(v)}$.

Different from the GSP auction with the Myerson reserve price, it is easy to construct an example to show that, even in a single-slot case, when there is a constraint on the total number of bid keywords (i.e., $\kappa$), the ratio between the revenue of \textsf{PBM-GSP} with the Myerson reserve price and optimal social welfare can become arbitrarily bad.

\begin{example}\label{example}
The high-level idea of the example is to construct a case in which there are only two keywords, a single slot, and one advertiser, and the advertiser can only bid on one keyword. We assume that one of the keywords guarantees high welfare but low utility for the advertiser, and the other guarantees low welfare but high utility. Thus in any Bayes-Nash equilibrium, the revenue will be very low compared to the optimal welfare since the advertiser is likely to bid on the low-welfare high-utility keyword (thus low revenue to search engine) to be better off.

Denote $q_1$ and $q_2$ as two queries, and denote $s_1$ and $s_2$ as two keywords. We assume that the two queries are equally likely to be issued, and assume that in the query-keyword bipartite graph $s_1$ is only matched to $q_1$ and $s_2$ is only matched to $q_2$. Assume there is only one advertiser who is only allowed to bid on one keyword, and the advertiser's valuation on each query (keyword) is independent. Considering the one-to-one mapping between queries and keywords, we use \textit{query} and \textit{keyword} interchangeably in the following descriptions.

Let $\epsilon_1>0$, $\epsilon_2>0$, and $M>0$ be some fixed values satisfying $\epsilon_1< \frac{1}{10}$, $\epsilon_2< \frac{\epsilon_1}{10}$, and $M>10$. For keyword  $s_1$, denote the valuation density function as $t_{1}(x)$ which is supported on $ [0,2\epsilon_1 - \epsilon_1^2]$; For keyword $s_2$, the valuation density function $t_{2}(x)$ is supported on $ [0,2^M]$. We assume $t_{1}(x)$ has the following properties: (1) $t_{1}(x)$ is an increasing and differentiable function; (2) $t_{1}(x) = \frac{1}{\epsilon_1}$ when $x \in [\epsilon_1, 2\epsilon_1 - \epsilon_1^2)$. We assume $t_{2}(x)$ has the following proprieties: (1) $t_{2}(x)$ is an increasing and differentiable function; (2) $t_{2}(2^M-\epsilon_2) = \frac{1}{2^{M+1}}$; (3) $t_{2}(2^M-\frac{\epsilon_2}{2}) = \frac{1}{2^M}$; (4) $\int_{2^M-\epsilon_2}^{2^M} t_{2}(x) dx = 1 - \epsilon_2$. It is easy to check the existence of such probability density functions.

First, we show under the above conditions, the expected homogeneity value $c$ is smaller than a constant. For query $q_1$, since $\epsilon_1 <\frac{1}{10}$, we have
{\small\begin{eqnarray}
\mathbb{E}[v_{q_1}] \geq \mathbb{P} (v_{q_1}>\epsilon_1)\mathbb{E}[v_{q_1}|v_{q_1}>\epsilon_1]= \int_{\epsilon_1}^{2\epsilon_1 - \epsilon_1^2} \frac{1}{\epsilon_1} x dx
= (1 - \epsilon_1)\frac{\epsilon_1 + (2\epsilon_1 - \epsilon_1^2)}{2} \geq \epsilon_1 \geq \frac{1}{2} (2\epsilon_1 - \epsilon_1^2). \nonumber
\end{eqnarray}}
For query $s_2$, considering $\epsilon_2 < \frac{1}{10}$ and $M>10$, we have
{\small\begin{eqnarray}
\mathbb{E}[v_{s_2}] \geq \int_{2^M-\epsilon_2}^{2^M} t_{2}(x) x dx \geq (1 - \epsilon_2)(2^M-\epsilon_2) \geq \frac{1}{2} 2^M.
\end{eqnarray}}
As a consequence, we come to the conclusion that $c$ is always bounded by 2.

Second, we show that $r_{s_1}<\epsilon_1$ and $ 2^M - \epsilon_2 < r_{s_2} <2^M - \frac{\epsilon_2}{2} $. Considering the value distribution is a MHR distribution, it suffices to prove $\phi_1(\epsilon_1)>0$ and $ \phi_2(2^M - \epsilon_2)<0<\phi_2(2^M - \frac{\epsilon_2}{2})$, which can be directly obtained from below.
{\small\begin{eqnarray}
\phi_1(\epsilon_1)=\epsilon_1 - \frac{1-T_{1}(\epsilon_1)}{t_{1}(\epsilon_1)}& =&\epsilon_1 - \frac{1 - \epsilon_1}{\frac{1}{\epsilon_1}} = \epsilon_1^2 > 0\\
\phi_2(2^M - \epsilon_2) = 2^M -\epsilon_2 - \frac{1-T_{2}(2^M - \epsilon_2)}{t_{2}(2^M - \epsilon_2)} &=& 2^M - \epsilon_2 - 2^{M+1}(1- \epsilon_2)< 0 \\
\phi_2(2^M - \frac{\epsilon_2}{2}) = 2^M -\frac{\epsilon_2}{2} - \frac{1-T_{2}(2^M - \frac{\epsilon_2}{2})}{t_{2}(2^M - \frac{\epsilon_2}{2})} &>& 2^M - \frac{\epsilon_2}{2} - 2^{M}(1 - \epsilon_2)> 0
\end{eqnarray}}
Finally, we give a lower bound of welfare and an upper bound of revenue in Bayes-Nash equilibrium. Since the reserve price on keyword $s_1$ is smaller than $\epsilon_1$ while the reserve price on keyword $s_2$ is larger than $2^M - \epsilon_2$, if the valuation $v^{s_1}$ is larger than $\epsilon_1 + \epsilon_2$, the utility of advertiser will be larger than $\epsilon_2$, and bidding keyword $s_1$ will be the dominant strategy. As a consequence, we have
{\small\begin{eqnarray}
\mathbb{E}[\mathcal{R}_r(\mathbf{b}(\mathbf{v}))] \leq  \frac{1}{2}\mathbb{P}(v^{s_1} > (\epsilon_1 + \epsilon_2)) r_{s_1} + \frac{1}{2}\mathbb{P}(v^{s_1} \leq (\epsilon_1 + \epsilon_2)) r_{s_2} \leq (1 - \epsilon_1 - \frac{\epsilon_2}{\epsilon_1})\frac{\epsilon_1}{2} + (\epsilon_1 + \frac{\epsilon_2}{\epsilon_1}) 2^{M-1}.
\end{eqnarray}}
The first inequality holds since $r_{s_1}<r_{s_2}$. Similarly, we can lower bound the expected optimal welfare by
{\small\begin{eqnarray}
\mathbb{E}[SW(\mathcal{OPT}(\mathbf{v}))] \geq \frac{1}{2}\mathbb{P}(v^{s_1} > r_{s_1}) r_{s_1} + \frac{1}{2}\mathbb{P}(v^{s_2} > r_{s_2}) r_{s_2} \geq (1-\epsilon_2 - \frac{\epsilon_2}{2}\frac{1}{2^M})\frac{2^M - \epsilon_2}{2}.
\end{eqnarray}}
Fixing $M$ and letting $\epsilon_1$ and $\frac{\epsilon_2}{\epsilon_1}$ approach zero, we have
{\small\begin{eqnarray}
\lim_{\epsilon_1\rightarrow 0, \frac{\epsilon_2}{\epsilon_1} \rightarrow 0}\frac{\mathbb{E}[\mathcal{R}_r(\mathbf{b}(\mathbf{v}))]}{\mathbb{E}[SW(\mathcal{OPT})(\mathbf{v})]} \leq \lim_{\epsilon_1\rightarrow 0, \frac{\epsilon_2}{\epsilon_1} \rightarrow 0}\frac{(1 - \epsilon_1 - \frac{\epsilon_2}{\epsilon_1})\frac{\epsilon_1}{2} + (\epsilon_1 + \frac{\epsilon_2}{\epsilon_1}) 2^{M-1}}{ (1-\epsilon_2 - \frac{\epsilon_2}{2}\frac{1}{2^M})\frac{2^M - \epsilon_2}{2}}=0
\end{eqnarray}}
\end{example}

To obtain meaningful results, we consider $\phi_s(v)$  with Lipchitz condition, which is defined as below.

\begin{definition} (MHR distribution with bounded derivative)
We say a distribution $T$ is an MHR distribution with bounded derivative $\eta$, if for any keyword $s$, the following conditions hold: (1) $\phi'_s(v) \geq 1$, for all $v\geq 0$, (2) $\phi'_s(v) \leq \eta$,  for all $v \geq r_s$.
\end{definition}
The following theorem shows that when the distribution is MHR with bounded derivative, we can obtain a bound for the ratio between optimal social welfare and worst-case revenue.
\begin{theorem}\label{revenue-2}
If any advertiser's keyword valuation vector is i.i.d. drawn from an MHR distribution $T$ with bounded derivative $\eta$, the auction system is $\beta$-KL-expressive and $c$-expected-homogeneous, then the revenue obtained by the \textsf{PBM-GSP} mechanism with the Myerson reserve price is at least $\frac{\beta}{1+\beta} \frac{1}{2\eta (c e)^2}$ of the optimal social welfare.
\end{theorem}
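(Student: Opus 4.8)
The plan is to exploit the defining feature of \textsf{PBM-GSP} --- that advertisers never compete across keywords --- so that both the revenue $\mathcal{R}_r(\mathbf{b})$ and the benchmark welfare split into independent per-keyword contributions, reducing the theorem to a single-keyword revenue-versus-welfare statement that is then summed over $s$ and lifted from keyword values to query values. Concretely, I would write $\mathbb{E}[\mathcal{R}_r(\mathbf{b}(\mathbf{v}))]=\sum_{s}\mathbb{E}[\mathcal{R}_r(\mathbf{b},s)]$, use the homogeneity step already established in the proof of Theorem~\ref{welfare1} to get $\mathbb{E}[SW(\mathcal{OPT}(\mathbf{v}))]\le c\,\mathbb{E}[SW(\mathfrak{v})]=c\sum_s\mathbb{E}[SW(\mathfrak{v},s)]$ (this contributes the first factor of $c$), and then aim for a per-keyword bound of the form $\mathbb{E}[\mathcal{R}_r(\mathbf{b},s)]\ge \frac{\beta}{1+\beta}\,\frac{1}{2\eta c e^{2}}\,\mathbb{E}[SW(\mathfrak{v},s)]$; chaining these yields exactly $\frac{\beta}{1+\beta}\frac{1}{2\eta(ce)^{2}}$.

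The heart of the argument is a single-keyword lemma tying together value, virtual value, and revenue through the MHR-with-bounded-derivative hypothesis. Since $\phi_s(r_s)=0$ and $1\le\phi_s'\le\eta$ on $[r_s,\infty)$, I would use the two-sided sandwich $v-r_s\le\phi_s(v)\le\eta(v-r_s)$ valid for $v\ge r_s$. The left inequality gives $v\le r_s+\phi_s(v)$, so the welfare collected above the reserve is at most the virtual welfare plus $r_s$ times the number of sales, and since each sale yields at least $r_s$ this is at most twice the per-keyword revenue (this is the factor $2$; for i.i.d.\ regular types the virtual welfare of the value-greedy allocation is precisely the revenue of the optimal reserve-augmented auction that GSP-with-reserve emulates). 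To control the welfare \emph{below} the reserve and the gap between a value and the reserve I would invoke the standard MHR estimates that the monopoly quantile satisfies $1-T_s(r_s)\ge 1/e$ and that $\mathbb{E}[v\mid v\ge r_s]\le 2r_s$; these supply one factor of $e$. A second factor of $e$ (and, in the multi-slot case, the loss implicit in the $(1-\tfrac1e,1)$-semi-smoothness of GSP reused in its revenue form) enters when passing from this idealized truthful per-keyword revenue to the revenue actually realized in a Bayes-Nash equilibrium, where bids are shaded; homogeneity is invoked a second time here, to relate the realized query values to the conditional keyword values $v_i^s$ on which the reserves and payments are defined, giving the second factor of $c$.

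The remaining, and I expect hardest, step is to absorb the $\kappa$-constraint, which is what forces the $\frac{\beta}{1+\beta}$ factor and is precisely where Example~\ref{example} shows the bound collapses without the $\eta$ hypothesis. In equilibrium each advertiser bids on the $\kappa$ keywords that maximize its utility, and by $\beta$-KL-expressiveness these cover at least a $\beta$ fraction of its positive keywords; the danger is that a high-welfare keyword is \emph{abandoned} because its utility happens to be low. The bounded-derivative inequality $\phi_s(v)\le\eta(v-r_s)$ is exactly what forbids this: it yields $\text{utility}\ge\frac1\eta\cdot(\text{revenue})$ on each keyword, so a keyword that would generate large revenue cannot have vanishing utility and hence cannot be traded away for a collection of low-revenue keywords. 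I would formalize this with the counting/exchange argument from the proof of Theorem~\ref{thm_POA_FULL}: bound the number of abandoned keywords $|S_i\setminus S_i'|$ by $(\tfrac1\beta-1)\kappa$ and charge their revenue against the revenue of the keywords that are bid upon, extracting the reciprocal-PoA factor $\frac{\beta}{1+\beta}$. Finally I would sum the per-keyword inequalities, take expectations over $\mathbf{v}$, and collect the constants $2$, $\eta$, and the two factors each of $c$ and $e$. The main obstacle throughout is keeping the three-way comparison among welfare, utility, and revenue simultaneously valid under both the shading of GSP bids and the endogenous, utility-driven choice of which keywords to bid on --- the $\eta$ hypothesis is the single ingredient that makes all three comparable at once.
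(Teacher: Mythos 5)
Your high-level skeleton (route the optimal welfare through the truthful profile $\mathfrak{v}$ for one factor of $c$, use the bounded derivative for a margin bound, and an exchange argument over abandoned keywords for the $\frac{\beta}{1+\beta}$ factor) points in the same direction as the paper, but three load-bearing steps fail as stated. First, your per-keyword target $\mathbb{E}[\mathcal{R}_r(\mathbf{b},s)]\ge \frac{\beta}{1+\beta}\frac{1}{2\eta c e^{2}}\mathbb{E}[SW(\mathfrak{v},s)]$ cannot hold keyword-by-keyword: precisely because of the $\kappa$ constraint, an equilibrium may leave some keyword abandoned by every would-be winner, so its revenue is $0$ while $SW(\mathfrak{v},s)>0$; the $\frac{\beta}{1+\beta}$ loss is irreducibly aggregate (your own later ``charging'' remark contradicts the per-keyword plan). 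Second, the claim that bounded derivative yields $\text{utility}\ge\frac{1}{\eta}(\text{revenue})$ on each keyword is false: a winner whose value barely exceeds $r_s$ has utility $\approx 0$ yet generates revenue at least $r_s$ per click. What $\phi_s'\le\eta$ actually buys is a lower bound on a \emph{deviator's} margin in terms of \emph{virtual value}: bidding $\frac{v_i^s+r_s}{2}$ retains margin $\frac{v_i^s-r_s}{2}\ge\frac{\phi_s(v_i^s)}{2\eta}$, and virtual value converts to revenue only through a truthful benchmark. Third, your factor-$2$ step needs ``virtual welfare of the equilibrium allocation $\le$ equilibrium revenue,'' i.e., Myerson's payment identity applied to the \textsf{PBM-GSP} equilibrium itself; that identity is unavailable here because each advertiser's type is a multi-dimensional vector of keyword values and the strategy space is $\kappa$-constrained, so the single-parameter envelope/revenue-equivalence argument does not apply to the equilibrium allocation.

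The paper repairs exactly these holes by inserting an intermediate \emph{truthful} benchmark, \textsf{PBM-VCG} (no $\kappa$ constraint, truthful bidding dominant), and chaining equilibrium revenue $\to$ equilibrium welfare $\to$ \textsf{PBM-VCG} revenue $\to$ truthful welfare $\to$ optimal welfare. The welfare-to-revenue conversion (Lemma~\ref{revenue-0}) rests on an observation missing from your sketch: expected homogeneity plus MHR give $\mathbb{E}[v_i^s]\le e\,r_s$, hence $P(v_i^s\le ce\,r_s)=1$, so every click's welfare is at most $ce$ times the payment it collects --- this, not semi-smoothness ``in revenue form,'' is how a welfare lower bound becomes a revenue lower bound (semi-smoothness only ever bounds welfare). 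The step from equilibrium welfare to \textsf{PBM-VCG} revenue (Lemma~\ref{revenue-1}) is your exchange idea done correctly: the deviation bids $\frac{v_i^s+r_s}{2}$ on $\kappa$ keywords sampled from $S_i$, and $\beta$-KL-expressiveness plus the margin bound give $(1+\beta)\,\mathbb{E}[SW_r(\mathbf{b}(\mathbf{v}))]\ge\frac{\beta}{2\eta}\,\mathbb{E}[\mathcal{R}_r^{\mathcal{PBM-VCG}}(\mathbf{v})]$, where the right-hand side is a revenue by Myerson's identity, valid there because \textsf{PBM-VCG} is truthful and unconstrained. Finally, relating $\mathbb{E}[\mathcal{R}_r^{\mathcal{PBM-VCG}}(\mathbf{v})]$ to $\mathbb{E}[SW(\mathfrak{v})]$ needs more than the scalar MHR fact $\mathbb{E}[\phi_s^+(v)]\ge\frac{1}{e}\mathbb{E}[v]$: the slot weight $w_{\sigma_{s,\mathfrak{v}}^{-1}(i)}$ is correlated with $v_i^s$, so the paper invokes a monotone-correlation inequality to get $\mathbb{E}[\phi_s^+(v_i^s)\,w_{\sigma_{s,\mathfrak{v}}^{-1}(i)}]\ge\frac{1}{e}\mathbb{E}[v_i^s\,w_{\sigma_{s,\mathfrak{v}}^{-1}(i)}]$; multiplying marginal expectations, as your accounting implicitly does, is not valid.
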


The theorem can be proved in three steps.
First, we use Lemma \ref{revenue-0} to bound the ratio between the revenue and the welfare of \textsf{PBM-GSP} with the Myerson reserve price. Second, we use Lemma \ref{revenue-1} to bound the ratio between the welfare of \textsf{PBM-GSP} and the revenue of \textsf{PBM-VCG} with the Myerson reserve price. Here \textsf{PBM-VCG} is defined as the VCG mechanism in which bidding the expected value on any keyword is the dominant strategy and there is has no constraint on the total number of bid keywords. Finally, we bound the ratio between the revenue of \textsf{PBM-VCG} with the Myerson reserve price and the optimal welfare. For ease of reference, we use $SW_r(\mathbf{b})$ to denote the social welfare of \textsf{PBM-GSP}, and use $\mathcal{R}^{\mathcal{PBM-VCG}}_r(\mathbf{v})$ to denote the revenue of \textsf{PBM-VCG}, when reserve price vector $r$ is associated with these mechanisms.

The basic idea of the proof can be explained as follows.
\begin{eqnarray}
\mathbb{E}_{\mathbf{v},\mathbf{b}(\mathbf{v})}[\mathcal{R}_r(\mathbf{b}(\mathbf{v}))]\rightarrow \mathbb{E}_{\mathbf{v},\mathbf{b}(\mathbf{v})}[SW_r(\mathbf{b}(\mathbf{v}))]\rightarrow \mathbb{E}_{\mathbf{v}}[\mathcal{R}_r^{\mathcal{PBM-VCG}}(\mathbf{v})]\rightarrow \mathbb{E}_{\mathbf{v}}[SW(\mathcal{OPT}(\mathbf{v}))]\nonumber
\end{eqnarray}

\begin{lemma}\label{revenue-0}
If any advertiser's keyword valuation vector is i.i.d. drawn from an MHR distribution $T$ with bounded derivative $\eta$, the auction system is $\beta$-KL-expressive and $c$-expected-homogeneous, then for any Bayes-Nash equilibrium of \textsf{PBM-GSP} with the Myerson reserve price, the expected revenue is at least $\frac{1}{c e}$ of the expected welfare.
\end{lemma}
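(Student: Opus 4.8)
The plan is to prove the bound keyword by keyword and, in fact, pointwise in the realized valuation profile. Since the \textsf{PBM-GSP} mechanism never lets advertisers compete across keywords, both the revenue and the reserve-truncated welfare decompose as sums over $s\in\mathcal{S}$, so it suffices to show $\mathcal{R}_r(\mathbf{b},s)\geq \frac{1}{ce}SW_r(\mathbf{b},s)$ for every keyword $s$ and every conservative bid profile $\mathbf{b}$ (equilibrium bids are conservative by Lemma \ref{convervativebidderlemma}); summing over $s$ and taking expectations then yields the claim. First I would rewrite both quantities in ``keyword-value'' form. Because the ranking $\sigma_{s,\mathbf{b}}(\cdot)$ depends only on the bids on $s$ and not on $q$, for a fixed profile we have $\int_{q\in N_G(s)}\pi_q(s)\,v^q_{\sigma_{s,\mathbf{b}}(k)}\,dP = v^s_{\sigma_{s,\mathbf{b}}(k)}\,A_s$, where $A_s=\int_{q\in N_G(s)}\pi_q(s)\,dP$. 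Hence $SW_r(\mathbf{b},s)=A_s\sum_k w_k v^s_{\sigma_{s,\mathbf{b}}(k)}\,\mathbb{I}[b^s_{\sigma_{s,\mathbf{b}}(k)}\geq r_s]$ and $\mathcal{R}_r(\mathbf{b},s)=A_s\sum_k w_k \max\{r_s,b^s_{\sigma_{s,\mathbf{b}}(k+1)}\}\,\mathbb{I}[b^s_{\sigma_{s,\mathbf{b}}(k)}\geq r_s]$, so the common factor $A_s$ will cancel in the ratio.

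The two remaining ingredients are a lower bound on the per-slot price and an upper bound on the per-slot value; I write $\sigma(k)$ for $\sigma_{s,\mathbf{b}}(k)$ below. For the revenue, whenever slot $k$ is allocated (i.e.\ $b^s_{\sigma(k)}\geq r_s$) the charged price $\max\{r_s,b^s_{\sigma(k+1)}\}$ is at least $r_s$, giving $\mathcal{R}_r(\mathbf{b},s)\geq A_s\, r_s \sum_k w_k\,\mathbb{I}[b^s_{\sigma(k)}\geq r_s]$. For the welfare, the key is to cap the realized keyword value by a multiple of the reserve. Expected homogeneity gives, exactly as in the second step of Theorem \ref{welfare1}, that $c\,\mathbb{E}[v_i^s]\geq v_i^q$ with probability one for every $q\in N_G(s)$; averaging this over $q$ with weights $\pi_q(s)$ and using that $v_i^s$ is precisely that weighted average shows $v_i^s\leq c\,\mathbb{E}[v_i^s]=c\mu_s$ almost surely, where $\mu_s$ is the mean of $T_s$. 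Combining this with the standard MHR fact that the monopoly (Myerson) reserve satisfies $\mu_s\leq e\,r_s$ (equivalently $r_s(1-T_s(r_s))\geq \frac{1}{e}\mu_s$), I conclude $v_i^s\leq ce\,r_s$ almost surely.

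Plugging this cap into the welfare expression yields $SW_r(\mathbf{b},s)\leq A_s\, ce\,r_s\sum_k w_k\,\mathbb{I}[b^s_{\sigma(k)}\geq r_s]\leq ce\,\mathcal{R}_r(\mathbf{b},s)$, which is the per-keyword bound; summing over $s$ and taking the expectation over $\mathbf{v}$ finishes the proof. I expect the only real subtlety to be the value cap $v_i^s\leq ce\,r_s$: one must verify both that expected homogeneity transfers from the queries to the realized keyword value $v_i^s$ (not merely to its mean) and that the MHR bound $\mu_s\le e\,r_s$ applies to the induced keyword-value distribution $T_s$. Everything else — the decomposition, the price-at-least-reserve step, and the cancellation of $A_s$ — is routine; in particular the argument uses only conservativeness rather than the full equilibrium conditions, so the bounded-derivative parameter $\eta$ plays no role in this lemma.
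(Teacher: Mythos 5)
Your proposal is correct and follows essentially the same route as the paper's proof: cap the realized keyword value by $ce\,r_s$ almost surely (expected homogeneity gives $v_i^s\leq c\,\mathbb{E}[v_i^s]$ a.s., and the MHR property gives $\mathbb{E}[v_i^s]\leq e\,r_s$), then compare slot by slot against the price-at-least-reserve lower bound on revenue, with matching indicators $\mathbb{I}[b^s_{\sigma_{s,\mathbf{b}}(k)}\geq r_s]$. Your added observations — that the bound is pointwise per keyword and bid profile, and that neither the equilibrium conditions nor $\eta$ are actually used — are accurate refinements of the paper's presentation rather than a different argument.
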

\begin{proof}
For any Bayes-Nash equilibrium $\mathbf{b}(\cdot)$, the expected social welfare of \textsf{PBM-GSP} with reserve price $r$ can be reformulated as follows:
{\small\begin{eqnarray}\label{Rev-00}
\mathbb{E}_{\mathbf{v},\mathbf{b}(\mathbf{v})}[SW_r(\mathbf{b}(\mathbf{v}))] &=& \mathbb{E}_{\mathbf{v},\mathbf{b}(\mathbf{v})}[\sum_s \int_{q \in N_G(s)}\pi_q(s)dP \sum_{k=1}^n w_k v_{\sigma_{s,\mathbf{b}(\mathbf{v})}(k)}^s \mathbb{I}[b_{\sigma_{s,\mathbf{b}(\mathbf{v})}(k)}^s \geq r_s]].\end{eqnarray}}
Since the auction system is $c$-expected-homogeneous, the following inequality holds with probability one,
{\small\begin{eqnarray}
c \mathbb{E}_{\mathbf{v}}[v_i^s] &=& \mathbb{E}_{\mathbf{v}}[\frac{\int_{q\in N_{G}(s)} c v_i^{q}\pi_{q}(s)dP}{\int_{q\in N_{G}(s)} \pi_{q}(s)dP}] = \frac{\int_{q\in N_{G}(s)} c \mathbb{E}_{\mathbf{v}}[v_i^{q}] \pi_{q}(s)dP}{\int_{q\in N_{G}(s)} \pi_{q}(s)dP} \geq \frac{\int_{q\in N_{G}(s)} v_i^{q} \pi_{q}(s)dP}{\int_{q\in N_{G}(s)} \pi_{q}(s)dP} = v_i^s.\nonumber
\end{eqnarray}}
Considering $T$ is an MHR distribution, we have $\mathbb{E}_{\mathbf{v}} [v_i^s] \leq e r_s (1 - T_s(r_s)) $, which yields,
{\small\begin{eqnarray} \label{Rev-01}
P(v_i^s \leq r_s c e)=1.
\end{eqnarray}}
Applying Eqn (\ref{Rev-01}) to Eqn(\ref{Rev-00}), we have
{\small\begin{eqnarray} \label{5.2app-1}
\mathbb{E}_{\mathbf{v},\mathbf{b}(\mathbf{v})}[SW_r(\mathbf{b}(\mathbf{v}))] &=& \mathbb{E}_{\mathbf{v},\mathbf{b}(\mathbf{v})}[\sum_s \int_{q \in N_G(s)}\pi_q(s)dP \sum_{k=1}^n w_k v_{\sigma_{s,\mathbf{b}(\mathbf{v})}(k)}^s \mathbb{I}[b_{\sigma_{s,\mathbf{b}(\mathbf{v})}(k)}^s \geq r_s]] \nonumber\\
&\leq& c e \mathbb{E}_{\mathbf{v},\mathbf{b}(\mathbf{v})}[\sum_s \int_{q \in N_G(s)}\pi_q(s)dP \sum_{k=1}^n w_k r_s \mathbb{I}[b_{\sigma_{s,\mathbf{b}(\mathbf{v})}(k)}^s \geq r_s]] \nonumber\\
&\leq& c e \mathbb{E}_{\mathbf{v},\mathbf{b}(\mathbf{v})}[\mathcal{R}_r(\mathbf{b}(\mathbf{v}))].\nonumber
\end{eqnarray}}
Then the lemma follows.
\end{proof}
\begin{lemma}\label{revenue-1}
If any advertiser's keyword valuation vector is i.i.d. drawn from an MHR distribution $T$ with bounded derivative $\eta$, the auction system is $\beta$-KL-expressive and $c$-expected-homogeneous, then the expected welfare in any Bayes-Nash equilibrium of \textsf{PBM-GSP} with the Myerson reserve price is at least $\frac{\beta}{1+\beta} \frac{1}{2\eta}$ of the expected revenue of \textsf{PBM-VCG} with the Myerson reserve price.
\end{lemma}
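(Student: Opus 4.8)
The plan is to prove Lemma~\ref{revenue-1} by a semi-smoothness–style deviation argument that respects the keyword decomposition intrinsic to \textsf{PBM}. Both the welfare $SW_r(\mathbf{b})$ and the revenue $\mathcal{R}_r^{\mathcal{PBM-VCG}}$ split additively over keywords, since no advertiser competes across keywords under \textsf{PBM-GSP}. Because \textsf{PBM-VCG} is truthful (bidding the expected value $v_i^s$ is dominant), I would first invoke Myerson's identity to rewrite its revenue as expected virtual welfare, i.e. $\mathbb{E}_{\mathbf v}[\mathcal{R}_r^{\mathcal{PBM-VCG}}(\mathbf v)]=\mathbb{E}_{\mathbf v}\big[\sum_s \mu_s \sum_k w_k\, \phi_s(v_{[k]}^s)\,\mathbb{I}[v_{[k]}^s\ge r_s]\big]$, where $\mu_s=\int_{q\in N_G(s)}\pi_q(s)\,dP$. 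The bounded-derivative hypothesis then linearizes the virtual value: since $\phi_s(r_s)=0$ and $\phi_s'\le \eta$ on $[r_s,\infty)$, one has $\phi_s(v)\le \eta\,(v-r_s)^+$, so the target revenue is at most $\eta$ times an above-reserve value functional. (Note the homogeneity constant $c$ plays no role in this step; it is carried only for uniformity with the surrounding results.)

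Second, I would construct the deviation exactly as in the proofs of Theorems~\ref{welfare1} and~\ref{thm_POA_FULL}: each advertiser $i$ samples $\kappa$ keywords uniformly from his positive set $S_i$ and, on a sampled keyword $s$, submits a conservative \emph{shaded} bid (bidding $\tfrac12 v_i^s$, or $\max\{r_s,\tfrac12 v_i^s\}$). By $\beta$-KL-expressiveness every positive keyword is sampled with probability at least $\beta$, which contributes the factor $\beta$. Since prices are nonnegative, the equilibrium welfare dominates the sum of equilibrium utilities, which in turn dominates the sum of these deviation utilities; if the deviation utilities can be shown to exceed $\beta\big(\tfrac{1}{2\eta}\mathcal R^{\mathcal{PBM-VCG}}_r-SW_r(\mathbf b)\big)$ in expectation, then rearranging $SW_r(\mathbf b)\ge \beta(\tfrac{1}{2\eta}\mathcal R^{\mathcal{PBM-VCG}}_r - SW_r(\mathbf b))$ yields $(1+\beta)SW_r(\mathbf b)\ge \tfrac{\beta}{2\eta}\mathcal R^{\mathcal{PBM-VCG}}_r$, which is precisely the claimed bound $SW_r(\mathbf b)\ge \tfrac{\beta}{1+\beta}\tfrac{1}{2\eta}\mathcal R^{\mathcal{PBM-VCG}}_r$.

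The heart of the argument is therefore a per-keyword inequality: for fixed $s$ and arbitrary competing equilibrium bids $\mathbf b^s_{-i}$, the shaded deviations guarantee $\sum_{i:s\in S_i}u_i^s(\tfrac12 v_i^s,\mathbf b^s_{-i})\ge \tfrac12\,\mu_s\sum_k w_k (v_{[k]}^s-r_s)^+ - SW_r(\mathbf b,s)$. I would establish this by the half-bidding deviation lemma for GSP \cite{lucier2012revenue}: bidding half of one's value either secures a slot at least as valuable as the one assigned in the above-reserve efficient allocation---yielding utility at least half of the corresponding value net of reserve---or else that slot is already priced high in $\mathbf b$, in which case the shortfall is absorbed by the equilibrium welfare term $SW_r(\mathbf b,s)$. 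Summing over keywords, chaining with the virtual-value linearization from the first step (which converts $\sum_k w_k(v_{[k]}^s-r_s)^+$ into $\tfrac{1}{\eta}$ times the virtual welfare, hence the revenue), and finally applying the $\beta$-covering rearrangement gives the lemma.

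The step I expect to be the main obstacle is exactly this per-keyword deviation inequality. The difficulty is twofold: the shaded deviation must be analyzed against \emph{arbitrary} competing bids rather than the true values, so the slot an advertiser actually captures is governed by the opponents' equilibrium bids and not by the \textsf{PBM-VCG} ordering; and one must argue that slots the deviators fail to capture are already generating enough above-reserve value in $\mathbf b$ to cover the deficit. Reconciling the position weights $w_k$ across the \textsf{PBM-VCG} ordering and the realized GSP ordering, while correctly threading the reserve price $r_s$ through both the utility lower bound and the indicator $\mathbb{I}[\cdot\ge r_s]$, is where the factor $\tfrac12$ is lost and where the computation must be carried out carefully; the remaining pieces (Myerson's identity, the linearization of $\phi_s$, and the $\beta$-covering rearrangement) are routine given the earlier lemmas.
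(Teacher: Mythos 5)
Your proposal takes essentially the same route as the paper's own proof: the paper's deviation also samples $\kappa$ keywords from the set an advertiser wins under truthful bidding and submits the shaded bid $\frac{v_i^s+r_s}{2}$ (your $\max\{r_s,\tfrac12 v_i^s\}$ variant works equally well, though plain $\tfrac12 v_i^s$ would fail when $r_s< v_i^s<2r_s$ since the bid can fall below the reserve), uses $\phi_s(r_s)=0$ and $\phi'_s\le\eta$ to lower-bound the per-click gain by $\phi_s(v_i^s)/(2\eta)$, and equates the \textsf{PBM-VCG} revenue with expected virtual welfare exactly as you do. The ``main obstacle'' you flag is resolved in the paper by precisely the case analysis you sketch --- either the deviator secures a position at least as good as its slot in the truthful allocation, or the bidder occupying that slot has value at least the shaded bid, so the deficit is absorbed by the equilibrium welfare term --- followed by the same $\beta$-sampling step and the $(1+\beta)$-rearrangement.
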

\begin{proof}
The proof technique we use here can be regarded as a variation of that in \cite{leme2010pure,roughgarden2009intrinsic}. Without loss of generality, we assume that if any advertiser bids on any keyword $s$, his/her bid price is larger than $r_s$.

Given the Myerson reserve prices on any keyword, denote $S_i$ as the set of keywords that advertiser $i$ can win a slot when all advertisers truthfully bid. We consider advertiser $i$ with a specific randomized strategy $b'_i(\cdot)$ that randomly chooses $\kappa$ keywords from $S_i$ and bid ${b'}_i^{s}(v_i) = \frac{v_i^s+r_s}{2}$ if keyword $s$ is sampled by the strategy.

Given any pure bid profile $\mathbf{b}$, any value profile $\mathbf{v}$ and any position $k$, given any sampled keyword $s$ by the strategy $b'_i(v_i)$, if advertiser $i$ changes his/her bid to $b'_i(v_i)$ and the position he/she gets is not lower than position $k$, his/her per-click utility must be larger than $v_i^{s} -{b'}_i^{s}(v_i)$; otherwise, the value of the advertiser ranked at the position $k$ must be larger than ${b'}_i^{s}(v_i)$. Then we have
{\small\begin{eqnarray}
u^s_i({b'}^{s}_i(v_i),\mathbf{b}^s_{-i}) &\geq&   \mathbb{I}[{b'}_i^{s}(v_i) > b_{\sigma_{s,\mathbf{b}}(k)}^{s}] w_k \int_{q \in N_G(s)} \pi_q(s) (v_i^{s} - {b'}_i^{s}(v_i)) dP \nonumber\\
&-&\mathbb{I}[{b'}_i^{s}(v_i) \leq b_{\sigma_{s,\mathbf{b}}(k)}^{s}] w_{k}\int_{q \in N_G(s)} \pi_q(s) (v_{\sigma_{s,\mathbf{b}}(k)}^s - {b'}_i^{s}(v_i)) dP.\nonumber\\
&\geq& w_k \int_{q \in N_G(s)} \pi_q(s) (v_i^{s} - {b'}_i^{s}(v_i)) dP- w_k \int_{q \in N_G(s)} \pi_q(s) v_{\sigma_{s,\mathbf{b}}(k)}^s dP.
\end{eqnarray}}
Since distribution $T$ is an MHR distribution with bounded derivative $\eta$, we have $v_i^s - {b'}_i^s(v_i) = \frac{v_i^s-r_s}{2} \geq \frac{\phi_s(v_i^s)}{2\eta}$, which yields,
{\small\begin{eqnarray}
u^s_i({b'}^{s}_i(v_i),\mathbf{b}^s_{-i})\geq w_{k} \int_{q \in N_G(s)} \pi_q(s) \frac{\phi_s(v_i^s)}{2\eta} dP -w_{k} \int_{q \in N_G(s)} \pi_q(s) v_{\sigma_{s,\mathbf{b}}(k)}^{s} dP. \label{revenue_121}
\end{eqnarray}}
Let $k$ be the position of advertiser $i$ on keyword $s$ when all advertiser truthfully bid, i.e., $k=\sigma_{s, \mathfrak{v}}^{-1}(i)$. By summing over all advertisers and all keywords, and taking expectation over the valuation profile $\mathbf{v}$, bidding strategy $\mathbf{b}(\mathbf{v})$ and the strategy $\mathbf{b}'(\cdot)$, we have

{\small\begin{eqnarray}
\mathbb{E}_{\mathbf{v},\mathbf{b}(\mathbf{v}),\mathbf{b}'(\mathbf{v})}[\sum_{i=1}^n u_i({b'}_i(v_i),\mathbf{b}_{-i}(\mathbf{v}_{-i})]
&\geq& \beta \mathbb{E}_{\mathbf{v},\mathbf{b}(\mathbf{v})}[\sum_{i=1}^n\sum_{s \in S_i} u^s_i({b'}^{s}_i(v_i),\mathbf{b}^s_{-i}(\mathbf{v}_{-i})] \nonumber\\
&\geq& \beta \mathbb{E}_{\mathbf{v},\mathbf{b}(\mathbf{v})} [\sum_{i=1}^n\sum_{s \in S_i}w_{\sigma_{s, \mathfrak{v}}^{-1}(i)} \int_{q \in N_G(s)} \pi_q(s) \frac{\phi_s(v_i^s)}{2\eta} dP] \nonumber \\
&-& \beta\mathbb{E}_{\mathbf{v},\mathbf{b}(\mathbf{v})} [\sum_{i=1}^n\sum_{s \in S_i} w_{\sigma_{s, \mathfrak{v}}^{-1}(i)} \int_{q \in N_G(s)} \pi_q(s) v_{\sigma_{s,\mathbf{b}}(\sigma_{s, \mathfrak{v}}^{-1}(i))}^{s}dP] \nonumber \\
&\geq& \beta \mathbb{E}_{\mathbf{v},\mathbf{b}(\mathbf{v})} [\sum_{i=1}^n\sum_{s \in S}w_{\sigma_{s, \mathfrak{v}}^{-1}(i)} \int_{q \in N_G(s)} \pi_q(s) \frac{\phi_s(v_i^s)\mathbb{I}[\phi_s(v_i^s) > 0]}{2\eta} dP] \nonumber \\
&-& \beta\mathbb{E}_{\mathbf{v},\mathbf{b}(\mathbf{v})} [\sum_{k=1}^n\sum_{s \in S_i} w_{k} \int_{q \in N_G(s)} \pi_q(s) v_{\sigma_{s, \mathbf{b}}(k)}^{s}dP] \nonumber \\
&\geq&  \frac{\beta}{2\eta}\mathbb{E}_{\mathbf{v}}[\mathcal{R}_r^{\mathcal{PBM-VCG}}(\mathbf{v})] - \beta \mathbb{E}_{\mathbf{v},\mathbf{b}(\mathbf{v})}[SW_r(\mathbf{b}(\mathbf{v}))]\nonumber
\end{eqnarray}}

Similar to the proof of Theorem \ref{welfare1}, we have for any Bayes-Nash equilibrium $\mathbf{b}(\cdot)$,
{\small\begin{eqnarray}\label{5.2app-2}
\mathbb{E}_{\mathbf{v},\mathbf{b}(\mathbf{v})}[SW_r(\mathbf{b}(\mathbf{v}))] &\geq& \mathbb{E}_{\mathbf{v},\mathbf{b}(\mathbf{v})}[u_i(\mathbf{b}(\mathbf{v}))] \geq
\mathbb{E}_{\mathbf{v},\mathbf{b}(\mathbf{v}),\mathbf{b}'(\mathbf{v})}[\sum_i u_i({b'}_i(v_i), \mathbf{b}_{-i}(\mathbf{v_{-i}}))] \nonumber\\
&\geq& \frac{\beta}{2\eta} \mathbb{E}_{\mathbf{v}}[\mathcal{R}_r^{\mathcal{PBM-VCG}}(\mathbf{v})] - \beta \mathbb{E}_{\mathbf{v},\mathbf{b}(\mathbf{v})}[SW_r(\mathbf{b}(\mathbf{v}))].
\end{eqnarray}}
Then the lemma follows.
\end{proof}
Now we give the overall proof of Theorem \ref{revenue-2}.
\begin{proof}[of Theorem 5.2]
Combining Lemma \ref{revenue-0} and Lemma \ref{revenue-1}, we have for any Bayes-Nash equilibrium $\mathbf{b}(\cdot)$ of \textsf{PBM-GSP} with the Myerson reserve price, the following holds:
{\small\begin{eqnarray}\label{5.2app-1}
\mathbb{E}_{\mathbf{v},\mathbf{b}(\mathbf{v})}[\mathcal{R}_r(\mathbf{b}(\mathbf{v}))] \geq \frac{\beta}{2ce(\beta + 1)\eta}\mathbb{E}_{\mathbf{v}}[\mathcal{R}_r^{\mathcal{PBM-VCG}}(\mathbf{v})].
\end{eqnarray}}
Denote $\mathcal{R}_r^{\mathcal{PBM-VCG}}(\mathbf{v},s)$ as the revenue obtained from keyword $s$ of the VCG mechanism with reserve price vector $r$, and denote $SW (\mathfrak{v},s)$ as the welfare obtained from keyword $s$ of with the VCG mechanism. For any given keyword $s$, denote $z_i^s(x)=\mathbb{E}_{\mathbf{v}} [w_{\sigma_{s,\mathfrak{v}}^{-1}(i)}|v_i^s=x]$. Then it is easy to see that $z_i^s(x)$ is an increasing function since advertiser will obtain a better postion in a welfare-maximizing allocation with a larger valuation. Denote $\phi_s^+(x)=\phi_s(x) \mathbb{I}[\phi_s(x) > 0]$. According to Lemma 5.1 in \cite{2013ratio}, we have
{\small\begin{eqnarray}
\frac{\mathbb{E}_{\mathbf{v}}[\phi_s^+(v^s_i)w_{\sigma_{s,\mathfrak{v}}^{-1}(i)}]}{\mathbb{E}_{\mathbf{v}}[v^s_iw_{\sigma_{s,\mathfrak{v}}^{-1}(i)}]}= \frac{\mathbb{E}_{\mathbf{v}}[\phi_s ^+(v^s_i)\mathbb{E}_{\mathbf{v}}[w_{\sigma_{s,\mathfrak{v}}^{-1}(i)}|v_i^s]]}{\mathbb{E}_{\mathbf{v}}[v^s_i\mathbb{E}_{\mathbf{v}}[w_{\sigma_{s,\mathfrak{v}}^{-1}(i)}|v_i^s]]} =\frac{\mathbb{E}_{\mathbf{v}}[\phi_s^+(v^s_i)z_i^s(v^s_i)]}{\mathbb{E}_{\mathbf{v}}[v^s_iz_i^s(v^s_i)]}\geq \frac{\mathbb{E}_{\mathbf{v}}[\phi_s ^+(v^s_i)]}{\mathbb{E}_{\mathbf{v}}[v^s_i]}\geq \frac{1}{e}
\end{eqnarray}}
That is, $\mathbb{E}_{\mathbf{v}}[\phi_s^+(v^s_i)w_{\sigma_{s,\mathfrak{v}}^{-1}(i)}] \geq \frac{1}{e}\mathbb{E}[v^s_iw_{\sigma_{s, \mathfrak{v}}^{-1}(i)}]$.
By summing over all advertisers and summing over all keywords, we have $\mathbb{E}_{\mathbf{v}}[\mathcal{R}_r^{\mathcal{PBM-VCG}}(\mathbf{v})] \geq \frac{1}{e} \mathbb{E}_{\mathbf{v}}[SW(\mathfrak{v})]$.
According to the second step of the proof for Theorem \ref{welfare1}, we have $\mathbb{E}_{\mathbf{v}}[SW(\mathfrak{v})] \geq \frac{1}{c} \mathbb{E}_{\mathbf{v}}[SW(\mathcal{OPT}(\mathbf{v}))]$. By combining with (\ref{5.2app-1}), we prove the theorem.
\end{proof}

Theorem \ref{revenue-2} can be further improved if we are only interested in the single-slot case.
\begin{theorem}\label{revenue-21}
If any advertiser's keyword valuation profile is i.i.d. drawn from an MHR distribution $T$ with bounded derivative $\eta$, the auction system is $\beta$-KL-expressive and $c$-homogeneous, and there is only one ad slot to sell, then the revenue obtained by the \textsf{PBM-GSP} mechanism with the Myerson reserve price is at least $\frac{\beta}{1+\beta} \frac{1}{\eta (c e)^2}$ of the optimal social welfare.
\end{theorem}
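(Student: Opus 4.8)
The plan is to reuse the three-link chain from the proof of Theorem~\ref{revenue-2},
\begin{eqnarray*}
\mathbb{E}_{\mathbf{v},\mathbf{b}(\mathbf{v})}[\mathcal{R}_r(\mathbf{b}(\mathbf{v}))]\ \to\ \mathbb{E}_{\mathbf{v},\mathbf{b}(\mathbf{v})}[SW_r(\mathbf{b}(\mathbf{v}))]\ \to\ \mathbb{E}_{\mathbf{v}}[\mathcal{R}_r^{\mathcal{PBM-VCG}}(\mathbf{v})]\ \to\ \mathbb{E}_{\mathbf{v}}[SW(\mathcal{OPT}(\mathbf{v}))],
\end{eqnarray*}
and to sharpen only the middle link so that the single-slot constant improves from $\frac{1}{2\eta}$ to $\frac{1}{\eta}$. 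The first link carries over from Lemma~\ref{revenue-0}: the MHR bound $\mathbb{E}_{\mathbf{v}}[v_i^s]\le e\,r_s(1-T_s(r_s))$ together with $c$-homogeneity still forces $v_i^s\le c e\,r_s$ almost surely, giving $\mathbb{E}[SW_r(\mathbf{b})]\le ce\,\mathbb{E}[\mathcal{R}_r(\mathbf{b})]$, and it uses nothing about the number of slots. The third link also carries over: $\mathbb{E}_{\mathbf{v}}[\mathcal{R}_r^{\mathcal{PBM-VCG}}(\mathbf{v})]\ge\frac1e\mathbb{E}_{\mathbf{v}}[SW(\mathfrak{v})]$ by the virtual-value argument (Myerson revenue equals positive-virtual welfare under the truthful allocation, plus the $1/e$ MHR estimate), and $\mathbb{E}_{\mathbf{v}}[SW(\mathfrak{v})]\ge\frac1c\mathbb{E}_{\mathbf{v}}[SW(\mathcal{OPT}(\mathbf{v}))]$ by the second step of the proof of Theorem~\ref{welfare1}. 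All the new work is therefore concentrated in re-deriving Lemma~\ref{revenue-1} with constant $\frac{\beta}{1+\beta}\frac1\eta$.

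For the middle link I would use the fact, recorded in the corollary after Lemma~\ref{convervativebidderlemma}, that in the single-slot case truthful bidding $b_i^s=v_i^s$ is a dominant strategy, and replace the midpoint deviation $\frac{v_i^s+r_s}{2}$ of Lemma~\ref{revenue-1} by the \emph{truthful} deviation: advertiser $i$ samples $\kappa$ keywords from $S_i$ (the keywords he wins under $\mathfrak{v}$) and bids $v_i^s$ on a sampled keyword $s$. Writing $b^s_{\max}=\max_{j\neq i}b^s_j$ and $\Pi_s=\int_{q\in N_G(s)}\pi_q(s)\,dP$, a truthful winner now pays the reserve-capped runner-up, so in both the winning and losing cases
\begin{eqnarray*}
u^s_i(v_i^s,\mathbf{b}^s_{-i})\ \ge\ w_1\Pi_s\big(v_i^s-\max\{r_s,b^s_{\max}\}\big)\ =\ w_1\Pi_s\big[(v_i^s-r_s)-(b^s_{\max}-r_s)^+\big].
\end{eqnarray*}
The decisive gain over Lemma~\ref{revenue-1} is that the \emph{full} gap $v_i^s-r_s$ appears rather than half of it, and $v_i^s-r_s\ge\phi_s(v_i^s)/\eta$ for $v_i^s\ge r_s$ because $\phi_s(r_s)=0$ and $\phi'_s\le\eta$; dropping the negative virtual values turns this into the positive-virtual-value sum that equals $\mathbb{E}[\mathcal{R}_r^{\mathcal{PBM-VCG}}]$ exactly as before.

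It then remains to charge the correction $(b^s_{\max}-r_s)^+$ to the equilibrium welfare. Since each keyword has a unique truthful winner, the double sum $\sum_i\sum_{s\in S_i}$ collapses to a single sum over keywords, and conservativeness gives $b^s_{\max}\le v^s_{\sigma_{s,\mathbf{b}}(1)}$, whence $w_1\Pi_s(b^s_{\max}-r_s)^+\le SW_r(\mathbf{b},s)$ keyword by keyword (in the case where the equilibrium winner bids below $r_s$ the left side vanishes). Summing the $\beta$-weighted deviation utilities, with the factor $\beta$ coming from $\beta$-KL-expressiveness exactly as in Lemma~\ref{revenue-1}, yields
\begin{eqnarray*}
\mathbb{E}_{\mathbf{v},\mathbf{b}(\mathbf{v}),\mathbf{b}'(\mathbf{v})}\Big[\sum_i u_i(b'_i(v_i),\mathbf{b}_{-i}(\mathbf{v}_{-i}))\Big]\ \ge\ \frac{\beta}{\eta}\mathbb{E}_{\mathbf{v}}[\mathcal{R}_r^{\mathcal{PBM-VCG}}(\mathbf{v})]-\beta\,\mathbb{E}_{\mathbf{v},\mathbf{b}(\mathbf{v})}[SW_r(\mathbf{b}(\mathbf{v}))],
\end{eqnarray*}
and combining with the equilibrium chain $\mathbb{E}[SW_r(\mathbf{b})]\ge\sum_i\mathbb{E}[u_i(\mathbf{b})]\ge\sum_i\mathbb{E}[u_i(b'_i,\mathbf{b}_{-i})]$ gives $\mathbb{E}[SW_r(\mathbf{b})]\ge\frac{\beta}{1+\beta}\frac1\eta\mathbb{E}[\mathcal{R}_r^{\mathcal{PBM-VCG}}]$. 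Multiplying the three links produces the claimed $\frac{\beta}{1+\beta}\frac{1}{\eta(ce)^2}$. I expect the main obstacle to be precisely this bookkeeping in the middle link: verifying that the single displayed lower bound on $u^s_i$ is valid in the losing case (where the bracket is negative while the true utility is zero) and that the runner-up term is charged to $SW_r(\mathbf{b})$ without being double-counted across advertisers, both of which rely on the single-slot structure and on conservativeness.
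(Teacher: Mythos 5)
Your proposal is correct and takes essentially the same route as the paper: the paper's own proof of this theorem also keeps the outer links from Theorem~\ref{revenue-2} (Lemma~\ref{revenue-0} and the \textsf{PBM-VCG}-to-optimal-welfare step) untouched and re-derives the middle link in the single-slot case using exactly the truthful deviation $b_i'^s=v_i^s$ on $\kappa$ keywords sampled from $S_i$, recovering the full gap $v_i^s-r_s\geq \phi_s(v_i^s)/\eta$ and hence the improvement from $\frac{1}{2\eta}$ to $\frac{1}{\eta}$. The only cosmetic difference is bookkeeping: you charge the correction term $(b^s_{\max}-r_s)^+$ to $SW_r(\mathbf{b},s)$ keyword by keyword, while the paper bounds the correction by $v^s_{\sigma_{s,\mathbf{b}}(1)}$ (using conservativeness and the convention that submitted bids exceed the reserve) before charging it to the equilibrium welfare.
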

\begin{proof}
We only need to slightly modify the proof of Lemma \ref{revenue-1} and get

$(\beta + 1)\mathbb{E}[SW_r(\mathbf{b}(\mathbf{v}))] \geq \frac{\beta}{\eta} \mathbb{E}[\mathcal{R}_r^\mathcal{PBM-VCG}(\mathbf{v})]$.
in the single-slot case.

Denote $S_i$ as the set of keywords that advertiser $i$ can win when all advertisers truthfully bid, we consider advertiser $i$ and a specific randomized strategy $b'_i(\cdot)$ that randomly chooses $\kappa$ keyword among $S_i$ and bid $v_i^s$ if keyword $s$ is sampled.

For any pure strategy $\mathbf{b}$ and any value profile $\mathbf{v}$, it can be proven that for any given sampled keyword $s$ by the strategy $b'_i(v^s_i)$,
{\small\begin{eqnarray}
u_i^s({b'}_i^s(v_i),\mathbf{b}^s_{-i}) &\geq&\mathbb{I}[b^s_{\sigma_{s,\mathbf{b}}(1)}\geq r_s] \int_{q\in N_G(s)} \pi_q(s) (v_i^{s} - b^s _{\sigma_{s,\mathbf{b}}(1)})dP \\
&+&\mathbb{I}[ b^s _{\sigma_{s,\mathbf{b}}(1)}< r_s] \int_{q\in N_G(s)} \pi_q(s) (v_i^{s} - r_s ) dP  \\
&\geq&  \int_{q\in N_G(s)} \pi_q(s) (v_i^{s} - r_s - v^s _{\sigma_{s,\mathbf{b}}(1)}) dP.
\end{eqnarray}}
Since distribution $T$ is an MHR distribution with bounded derivative $\eta$, we have
{\small\begin{eqnarray}
u_i^s({b'}^s_i(v_i),\mathbf{b}^s_{-i}) &\geq& \int_{q\in N_G(s)} \pi_q(s) (\frac{\phi_s(v_i^s)}{\eta} - v^s _{\sigma_{s,\mathbf{b}}(1)} ) dP.
\end{eqnarray}}

The proofs follows by proceeding other parts of the proof of Lemma \ref{revenue-1}
\end{proof}
\section{Related Works}
In this section, for the sake of completeness, we give an overview of the related works to the paper. Overall, the related works can be categorized into three groups.

First, there have been a rich literature of theoretical analysis on GSP auctions. For example, \cite{leme2010pure,lucier2011gsp,caragiannis2011efficiency} analyze the PoA when bidders are conservative, they show that the pure PoA is at most $1.282$, mixed-strategy PoA is at most $2.310$ and Bayes-Nash PoA is at most $2.927$ in GSP auction. Some other works analyze the revenue of GSP. In \cite{edelman2007strategic,varian2007position}, it is shown that GSP's revenue is at least as good as VCG in envy-free equilibrium. In \cite{lucier2012revenue,caragiannis2012revenue}, the revenue of GSP over all Bayes-Nash equilibrium is studied and a ratio bound ($4.72$ for regular distribution and 3.46 for MHR distribution) is given between the optimal auction and GSP with a proper reserve price in the Bayesian setting.

Second, there are a few works that pay attention to the broad-match mechanism, and in particular the \textsf{SBM} mechanism. There have been several pieces of work that study the optimization problems regarding \textsf{SBM}. For example, in \cite{feldman2007budget}, the budget optimization problem is considered and a $(1-1/e)$-approximation algorithm is developed. In \cite{even2009bid}, it is shown that the bid optimization problem regarding \textsf{SBM} is NP-Hard and is inapproximable with any reasonable approximation factor unless $P=NP$. Some other works perform PoA analysis on the \textsf{SBM-GSP} mechanism. In \cite{dhangwatnotai2011multi}, by assuming advertisers to play undominated strategies, the authors develop an almost-tight bound for the pure PoA of \textsf{SBM-GSP} in the single-slot case.

Third, the design principle of our proposed \textsf{PBM} mechanism is also related to the probabilistic single-item auctions with mixed signals \cite{bro2012send,emek2012signaling, 2013CPS}. A \emph{probabilistic single-item auction} is defined as follows. The auctioneer wishes to sell the items drawn from an item set $\mathcal{Q}$ according to a known distribution $P$ to $n$ bidders. Each bidder $i$ has a valuation of $v_i(q)$ on item $q$, but he/she cannot directly observe the item before he/she bids. At each time, the auctioneer draws an item and broadcasts a signal to the bidders according to a signaling scheme defined at the very beginning of the auction. The signaling scheme can be probabilistic, and can be strategically designed by the auctioneer. After receiving the signals, the bidders submit their bids on the signals, and the item will be allocated and charged to one of the bidders by using the second price auction. If we define the item set as the query space, define the signals as the keywords, and define the signaling scheme based on the query-keyword bipartite graph and the matching probability distribution $\pi_q$, then the above problem will become very similar to our \textsf{PBM} problem. However, we would like to point out three critical differences between them. (1) In \cite{bro2012send,emek2012signaling}, the whole signaling scheme can be strategically chosen by the auctioneer, but in our setting, the signals (keywords) that can be broadcasted given an item (query) is restricted according to the bipartite graph. The strategy of the auctioneer only lies in the design of the matching probability distribution. (2) In \cite{2013CPS}, the author analyzed the situation that each signal has one winner, like the single slot in our case and assumed that all participants will truthfully report their value on it. With this, the author try to optimize a matching probability distribution, which is restricted according to the bipartite graph, and it has constant approximation to the optimal social welfare and revenue bound. (3) In our problem, each bidder is allowed to bid on only up to $\kappa$ signals (keywords) and, for each keyword, auctioneer will conduct a GSP on it. As a result, the truth telling will not be (always not) a dominant strategy any longer. These additional constraints will increase the difficulty of the problem and the techniques developed in \cite{bro2012send,emek2012signaling,2013CPS} need to be enhanced or extended to fit into our setting.

\section{Conclusion}
In this paper, we propose a probabilistic broad-match mechanism for sponsored search. We show that this new mechanism has better theoretical guarantees than the currently used broad-match mechanism in terms of both social welfare and search engine revenue. We have summarized our key results in Table \ref{resultsummary1}, Table \ref{resultsummary2} for ease of reference.

For future work, we plan to work on the following topics. First, we will work on the optimization of the matching probability in the proposed mechanism so as to maximize the social welfare or revenue. Second, we will investigate if there is a tighter bound for our results. Third, we will perform more theoretical analysis on the currently used broad-match mechanism, which is far from complete in the literature.
\begin{table}
\centering
\tbl{Summary of Social Welfare Analysis for the \textsf{PBM-GSP} Mechanism}{
\footnotesize%

%\vspace{10pt}
\begin{tabular}{|c|c|c|c|}
    \hline
  % after \\: \hline or \cline{col1-col2} \cline{col3-col4} ...
  \multicolumn{2}{|c|}{} & Multi-slot & Single-slot \\\hline
  \multirow{2}{*}{Social welfare} & Bayesian &  $\frac{(e-1)\beta}{ec(\beta + 1)}SW(OPT)$ & $\frac{\beta}{c(\beta + 1)}SW(OPT)$ \\\cline{2-4}
& Full-information & $\frac{\beta}{c(\beta + 1)}SW(OPT)$ & $\frac{\beta}{c} SW(OPT)$\\
  \hline
\end{tabular}}
\label{resultsummary}
\label{resultsummary1}
\end{table}
\begin{table}
\centering
\tbl{Summary of Revenue Analysis for the \textsf{PBM-GSP} Mechanism}{
\footnotesize%

%\vspace{10pt}
\begin{tabular}{|c|c|c|c|}
    \hline
  % after \\: \hline or \cline{col1-col2} \cline{col3-col4} ...
\multicolumn{2}{|c|}{} &Multi-slot& Single-slot\\\hline
  Revenue & Bayesian &$\frac{\beta}{(1+\beta)(ce)^2 2 \eta}SW(OPT)$  & $\frac{\beta}{(1+\beta)(ce)^2 \eta}SW(OPT)$ \\
  \hline
\end{tabular}}
\label{resultsummary2}
\end{table}

% Bibliography
\bibliographystyle{acmsmall}
\small
\bibliography{PBM_GSP}
                             % Sample .bib file with references that match those in
                             % the 'Specifications Document (V1.5)' as well containing
                             % 'legacy' bibs and bibs with 'alternate codings'.
                             % Gerry Murray - March 2012

% History dates

% Electronic Appendix
\elecappendix
\normalsize
\medskip

\section{APPENDEX: Comparison Between \textsf{PBM-GSP} mechanism and \textsf{SBM-GSP} mechanism}

\subsection{Theoretical Comparison Between Two Expressiveness Measures}

There are both differences and connections between KL-expressiveness and QL-expressiveness.
First, KL-expressiveness is focused on the coverage of positive keywords while QL-expressiveness is focused on the coverage of positive queries. Second, given the query-keyword bipartite graph $G$, each advertiser's positive query set $Q_i$, and $\kappa$, the value of KL-expressiveness can be computed in linear time, while determining the value of QL-expressiveness is NP-Hard in general (since the set cover problem is its sub routine). Given the same query-keyword bipartite graph, the KL-expressiveness and QL-expressiveness are actually tightly coupled due to the mutual bounds given in the following proposition.

\begin{proposition}\label{proposition123}
If the maximum degree of the query-keyword bipartite graph $G$ is bounded by $\gamma$, and the auction system is $\alpha$-QL-expressive and $\beta$-KL-expressive, then $\frac{\alpha}{\gamma^2}\leq\beta\leq \gamma\alpha$.
\end{proposition}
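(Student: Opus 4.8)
The plan is to reduce the proposition to a per-advertiser comparison and then reconcile the two measures, each of which is governed by a (possibly different) worst-case advertiser. Throughout I read $\alpha$ and $\beta$ as the \emph{exact} (largest admissible) expressiveness values of the system, since otherwise the two-sided bound cannot hold: shrinking $\alpha$ while keeping $\beta$ fixed would violate $\beta \leq \gamma\alpha$. Concretely, writing $m_i = |Q_i|$ and $n_i = |\{s : N_G(s)\cap Q_i \neq \emptyset\}|$ for the numbers of positive queries and positive keywords of advertiser $i$, the KL value is $\beta = \min_i \beta_i$ with $\beta_i = \min(1, \kappa/n_i)$, and the QL value is $\alpha = \min_i \alpha_i$, where $\alpha_i$ is the largest $\alpha$ such that every $Q \subseteq Q_i$ with $|Q| \leq \alpha m_i$ can be covered by $\kappa$ keywords.

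First I would establish the degree-based link $m_i/\gamma \leq n_i \leq \gamma m_i$ by double-counting the edges $E_i$ between $Q_i$ and its positive keywords. Every edge incident to $Q_i$ lands in a positive keyword, so $E_i = \sum_{q\in Q_i}|N_G(q)|$, giving $m_i \leq E_i \leq \gamma m_i$ since each query-degree lies in $[1,\gamma]$. Counting the same edges from the keyword side, $E_i = \sum_{s}|N_G(s)\cap Q_i|$ over positive $s$ with each term in $[1,\gamma]$, gives $n_i \leq E_i \leq \gamma n_i$. Combining the two chains yields the stated bounds on $n_i$ in terms of $m_i$.

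Next I would sandwich $\alpha_i$ using two covering facts that follow purely from the degree bound. On one hand, selecting one neighbouring keyword per query covers any $Q$ with at most $|Q|$ keywords, so every $Q$ of size $\leq \kappa$ is coverable, whence $\alpha_i \geq \min(\kappa/m_i, 1)$. On the other hand, each keyword covers at most $\gamma$ queries, so covering $L$ queries needs at least $L/\gamma$ keywords; thus any $Q$ with $|Q| > \gamma\kappa$ is \emph{un}coverable, forcing $\alpha_i \leq \gamma\kappa/m_i$. Feeding in $m_i/\gamma \leq n_i \leq \gamma m_i$ and $\beta_i = \min(1,\kappa/n_i)$ then yields the two per-advertiser inequalities $\beta_i \leq \gamma\alpha_i$ (from the lower bound on $\alpha_i$ together with $n_i \geq m_i/\gamma$) and $\alpha_i \leq \gamma^2\beta_i$ (from the upper bound on $\alpha_i$ together with $n_i \leq \gamma m_i$); the capped cases $\kappa > m_i$ and $\kappa > n_i$ are handled separately using $\alpha_i \leq 1$ and $\beta_i \leq 1$.

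The step I expect to be the main obstacle is passing from these per-advertiser inequalities to the global statement, because $\alpha$ and $\beta$ may be pinned down by \emph{different} advertisers, so one cannot simply compare $\alpha_i$ with $\beta_i$ for a single $i$. The clean resolution is to exploit monotonicity of the two minima. For $\beta \leq \gamma\alpha$, pick an advertiser $j$ attaining the minimum defining $\alpha$, so that $\beta \leq \beta_j \leq \gamma\alpha_j = \gamma\alpha$. For $\beta \geq \alpha/\gamma^2$, pick an advertiser $j$ attaining the minimum defining $\beta$, so that $\alpha \leq \alpha_j \leq \gamma^2\beta_j = \gamma^2\beta$. Combining the two directions gives $\alpha/\gamma^2 \leq \beta \leq \gamma\alpha$, which completes the proof.
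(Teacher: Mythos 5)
Your proof is correct and follows essentially the same route as the paper's: both rely on the covering facts (one keyword per query for the lower bound on $\alpha$, and the fact that $\kappa$ keywords cover at most $\gamma\kappa$ queries for the upper bound) together with the degree relation $|Q_i|/\gamma \leq |\{s : N_G(s)\cap Q_i \neq \emptyset\}| \leq \gamma |Q_i|$. Your explicit treatment of the minima via argmin advertisers, the edge double-counting, and the caps at $1$ are just more careful presentations of steps the paper handles implicitly by keeping the pointwise inequalities inside $\min_i$.
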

\begin{proof}
First we prove $\beta\leq \gamma\alpha$. It is clear that any $\kappa$ queries can be covered by $\kappa$ keywords, thus we have $\alpha\geq\min_i\frac{\kappa}{|Q_i|}$. Considering that $|Q_i|$ should be smaller than $\gamma$ times $|\{ s: N_G(s)\cap Q_i\neq\emptyset\}|$, which is the number of positive keywords for advertiser $i$, we have $\alpha\geq\min_i\frac{\kappa}{|Q_i|}\geq\min_i\frac{\kappa}{\gamma|\{ s: N_G(s)\cap Q_i\neq\emptyset\}|}=\frac{\beta}{\gamma}$. Next we prove $\frac{\alpha}{\gamma^2}\leq\beta$. By definition, $\beta=\min_i\frac{\kappa}{|\{ s: N_G(s)\cap Q_i\neq\emptyset\}|}$. Considering $\kappa$ keywords can cover at most $\kappa \gamma$ queries, we have $\beta=\min_i\frac{\kappa \gamma}{|\{ s: N_G(s)\cap Q_i\neq\emptyset\}|\gamma} \geq \min_i\frac{\alpha |Q_i|}{|\{ s: N_G(s)\cap Q_i\neq\emptyset\}|\gamma}\geq \frac{\alpha}{\gamma^2}$.
\end{proof}
The bounds given in Proposition \ref{proposition123} depend on $\gamma$, the degree of the query-keyword bipartite graph. When $\gamma$ is large, the bounds become useless. In this case, it would be more meaningful to directly compare the values of QL-expressiveness and KL-expressiveness. This is exactly what we do in the next subsection.

\subsection{Empirical Comparison Between Two Expressiveness Measures}

We base our empirical study on the log data obtained from a commercial search engine, which contains the query-keyword bipartite graph and advertiser's bid keywords in a one-week time frame. Please note that even with this real data, it is still highly non-trivial to conduct empirical study due to the following reasons. (1) The computation of the QL-expressiveness is NP-Hard in general (since the set cover problem is its sub routine), which prevents us from doing experiments on very large data. (2) Both definitions of expressiveness require knowledge about the positive queries for an advertiser, which is unknown in practice (we only know the bid prices on the keywords). To tackle these challenges, we have designed our experiments as follows.

First, we restrict our empirical study to small micro markets. A micro market refers to the queries and ads (and also their bid keywords) that concentrate on a specific product. A micro market can be roughly considered as a closed system, and the expressiveness in different micro markets can be treated separately. For example, the queries and ads about ``insurance" form a micro market. Other examples of micro market include ``travel", ``hotel", and ``car". In this work, we employ a simple and straightforward method to identify micro markets, and define the size of a micro market using the number of keywords in it.\footnote{We simply use \emph{term sharing} as the rule to define micro markets. That is, if a set of keywords and queries contain the same term A (e.g., insurance), we will consider them to belong to the same micro market. We take this simple approach because we did not find previous works that can be used to fulfill the task. We believe different ways of defining micro markets will not significantly affect our experimental results; however, we are willing to adopt more advanced approaches when they are available in the future. Furthermore, we use the number of keywords to define the size of a micro market because it is the most critical factor in the computation of QL-expressiveness.} To ensure that the computation of the QL-expressiveness is feasible, we randomly sample 1000 micro markets whose sizes are smaller than 20, and use them for our experimental study.

Second, we simulate the value on a query using a similarity based approach. Specifically, we assume that if a query is similar enough (measured by a threshold) to the keyword that an advertiser bids on, it will be a positive query. In our experiment, we compute the similarity between query $q$ and keyword $s$ based on the Levenshtein distance $d(q,s)$, which is popularly used in information retrieval and usually referred to as the edit distance. Informally speaking, the Levenshtein distance equals the minimal number of single-character edits required to change query $q$ to keyword $s$. Based on the Levenshtein distance, we define the similarity function as $\text{Sim}(q,s) = 1 - \frac{d(q,s)}{\text{maxlength}(q,s)}$.

Then if we observe that advertiser $i$ bids on keywords $\{s_1,s_2,\cdots,s_m\}$ in the historical auction logs, we define the set of positive queries as follows,
{\small\begin{equation}
Q_i(\theta)= \{q\in\mathcal{Q}: \exists s\in\{s_1,s_2,\cdots,s_m\}, \text{Sim}(q,s)>\theta \}.
\end{equation}}
When $\kappa$ and $\theta$ are given, the values of both KL-expressiveness $\beta$ and QL-expressiveness $\alpha$ for each micro market can be computed. In our experiments, we change $\theta$ from 0.9 to 0 and change $\kappa$ from 1 to the size of the micro market, so as to generate a large number of ($\alpha, \beta$) pairs. We conduct some statistical significance test on these data points, and find that $\beta$ is larger than $\alpha/3$ with a p-value$=0.01$. This gives a very accurate quantitative relationship between the two notions of expressiveness based on real data.
\begin{figure}
\centering \label{discount}
\epsfig{file=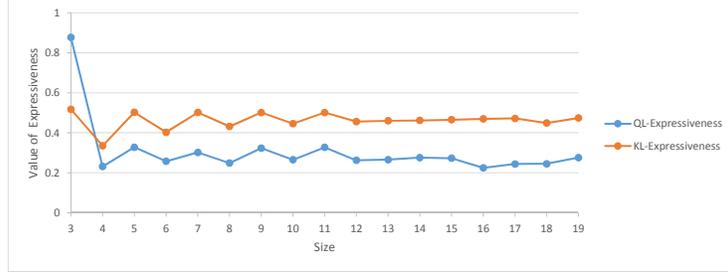,width=0.7\textwidth}
\caption{Expressiveness values w.r.t. size of micro market}
\end{figure}

\begin{table}%
%\vspace{15pt}
\tbl{QL-expressiveness on real data\label{tab:one}}%
{\footnotesize
\begin{tabular}{|l|l|l|l|l|l|l|l|l|l|l|l|}
\hline
\backslashbox{$\theta$}{$\kappa/$size}& 0-10\% & 10-20\% & 20-30\% & 30-40\% & 40-50\% & 50-60\% & 60-70\% & 70-80\% & 80-90\% & 90-100\% \\\hline
0.9 & 0.518	& 0.743	& 0.878	& 0.952	& 0.980	& 0.990	& 0.996	& 0.999	& 0.999	& 1 \\\hline
0.8 & 0.432	& 0.670	& 0.826	& 0.923	& 0.964	& 0.982	& 0.993	& 0.998	& 0.999	& 1 \\\hline
0.7 & 0.247	& 0.492	& 0.687	& 0.834	& 0.917	& 0.960	& 0.983	& 0.990	& 0.996	& 0.999 \\\hline
0.6 & 0.123	& 0.290	& 0.485	& 0.658	& 0.803	& 0.875	& 0.942	& 0.972	& 0.991	& 0.999 \\\hline
0.5 & 0.080	& 0.179	& 0.314	& 0.481	& 0.653	& 0.727	& 0.838	& 0.915	& 0.969	& 0.998 \\\hline
0.4 & 0.055	& 0.115	& 0.203	& 0.326	& 0.509	& 0.571	& 0.710	& 0.808	& 0.920	& 0.994 \\\hline
0.3 & 0.044	& 0.094	& 0.157	& 0.243	& 0.409	& 0.428	& 0.586	& 0.691	& 0.842	& 0.981 \\\hline
0.2 & 0.039	& 0.084	& 0.141	& 0.207	& 0.277	& 0.360	& 0.501	& 0.617	& 0.770	& 0.973 \\\hline
0.1 & 0.038	& 0.083	& 0.138	& 0.201	& 0.260	& 0.351	& 0.485	& 0.590	& 0.756	& 0.968 \\\hline
0 & 0.038	& 0.082	& 0.138	& 0.200	& 0.259	& 0.349	& 0.484	& 0.587	& 0.753	& 0.965 \\\hline
\end{tabular}}
\label{QLexp}
\end{table}%

\begin{table}%
\tbl{KL-expressiveness on real data\label{tab:one}}{\footnotesize%
\begin{tabular}{|l|l|l|l|l|l|l|l|l|l|l|l|}
\hline
\backslashbox{$\theta$}{$\kappa/$size}& 0-10\% & 10-20\% & 20-30\% & 30-40\% & 40-50\% & 50-60\% & 60-70\% & 70-80\% & 80-90\% & 90-100\% \\\hline
0.9 & 0.222	& 0.431	& 0.633	& 0.765	& 0.858	& 0.915	& 0.950	& 0.974	& 0.991	&0.999\\\hline
0.8 & 0.202	& 0.404	& 0.607	& 0.745	& 0.840	& 0.902	& 0.941	& 0.967	& 0.988	& 0.999\\\hline
0.7 & 0.167	& 0.342	& 0.536 & 0.682	& 0.794	& 0.866	& 0.913	& 0.953	& 0.980	& 0.998\\\hline
0.6 & 0.126	& 0.261	& 0.425	& 0.570	& 0.705	& 0.796	& 0.862	& 0.922	& 0.963	& 0.996\\\hline
0.5 & 0.101	& 0.211	& 0.345	& 0.473	& 0.613	& 0.720	& 0.800	& 0.878	& 0.941 & 0.994\\\hline
0.4 & 0.085	& 0.177	& 0.292	& 0.403	& 0.535	& 0.643	& 0.731	& 0.829	& 0.912	& 0.990\\\hline
0.3 & 0.078	& 0.162	& 0.266	& 0.369	& 0.491	& 0.592	& 0.681 & 0.781	& 0.881	& 0.986\\\hline
0.2 & 0.075	& 0.156	& 0.257	& 0.357	& 0.474	& 0.571	& 0.658	& 0.759	& 0.860	& 0.982\\\hline
0.1 & 0.074	& 0.155	& 0.255	& 0.355	& 0.470	& 0.567	& 0.655	& 0.755	& 0.855	& 0.980\\\hline
0 & 0.074	& 0.155	& 0.255	& 0.354	& 0.470	& 0.567 & 0.655	& 0.755	& 0.855	& 0.980\\\hline
\end{tabular}}
\label{KLexp}
\end{table}

To get a more friendly view of the data points, we create Tables \ref{QLexp} and \ref{KLexp}. Since the sizes of different micro markets can vary largely, we normalize $\kappa$ with the size of the micro market and quantify the values into ten buckets. For each bucket we calculate the average $\alpha$ and $\beta$ values as listed in the tables. From the tables, we can see that in each bucket, with the increasing number of positive queries, both $\alpha$ and $\beta$ become smaller. On the other hand, if the number of positive queries is fixed, when the normalized $\kappa$ grows, both $\alpha$ and $\beta$ become larger.

Due to the computational complexity, we only use the micro markets whose sizes are smaller than 20 in our experiments. One may doubt whether our conclusion can be generalized to larger micro markets. Our answer is positive. This is because the values of $\alpha$ and $\beta$ have become stable when the size of the micro markets is around 10. For each element in Tables \ref{QLexp} and \ref{KLexp}, we can plot a figure showing the comparison between $\alpha$ and $\beta$ with respect to the size of the micro market. We find that the figures for all the elements demonstrate the same trend. Here we give one example in Figure 1 (corresponding to $\theta=0.2$ and $\kappa=0.4\cdot size$). From the figure, we can see that $\beta$ approaches 0.45 and $\alpha$ stabilizes to around 0.25 very quickly. \footnote{For completeness, we put all the figures at \url{http://research.microsoft.com/en-us/people/tyliu/ec2013-appendix.zip}} Therefore we can expect that the comparison between $\alpha$ and $\beta$ has stabilized and the conclusion will not change by much for larger micro markets.

\end{document}